\documentclass[sigconf,nonacm]{acmart}

\makeatletter
\let\@acmBadgeL@image\@empty
\let\@acmBadgeR@image\@empty
\makeatother

\renewcommand\footnotetextcopyrightpermission[1]{}

\makeatletter
\newif\if@restonecol
\makeatother
 
\usepackage[linesnumbered,ruled,vlined]{algorithm2e}
\SetKwComment{Comment}{// }{}
\usepackage{amsmath}

\usepackage{enumitem}
\usepackage{multicol}
\usepackage{multirow}
\usepackage{float}
\usepackage{subfig}
\usepackage{balance}
\usepackage{bm}
\usepackage[normalem]{ulem}
\usepackage[table]{xcolor}
\usepackage{makecell}
\usepackage{tabularx}
\usepackage{booktabs}
\usepackage{array}
\usepackage{xcolor}
\usepackage{pifont}
\usepackage{bbding}
\usepackage{graphicx}
\usepackage{CJKutf8}
\usepackage{marginnote}
\usepackage{hhline}

\newcommand{\sym}[2][1.25]{\raisebox{0.15ex}{\scalebox{#1}{#2}}}

\newcommand{\xmark}{\sym{\textcolor{red!70!black}{\ding{55}}}} 

\newcommand{\bstar}{\sym{\textcolor{blue}{\ding{72}}}}

\definecolor{deepPink}{RGB}{204,0,102}

\newtheorem{theorem}{\textbf{Theorem}}

\newcommand{\sift}{\texttt{SIFT}\xspace}
\newcommand{\deep}{\texttt{Deep}\xspace}
\newcommand{\msong}{\texttt{MSong}\xspace}
\newcommand{\glove}{\texttt{GloVe}\xspace}
\newcommand{\gist}{\texttt{GIST}\xspace}
\newcommand{\tin}{\texttt{Tiny}\xspace}

\newcommand{\ivfflat}{\texttt{IVFFlat}\xspace}
\newcommand{\crackivf}{\texttt{CrackIVF}\xspace}
\newcommand{\tribase}{\texttt{Tribase}\xspace}
\newcommand{\spann}{\texttt{SPANN}\xspace}
\newcommand{\quake}{\texttt{Quake}\xspace}
\newcommand{\spfresh}{\texttt{SPFresh}\xspace}

\newcommand{\ourMethod}{\texttt{CLIP}\xspace}

\newcommand{\ourIVF}{\texttt{IVF-CLIP}\xspace}
\newcommand{\ourHIVF}{\texttt{HIVF-CLIP}\xspace}
\newcommand{\ourLSM}{\texttt{LSM-IVF}\xspace}

\AtBeginDocument{%
  }

\usepackage{tcolorbox}
\tcbuselibrary{breakable}
\usepackage{marginnote}
\usepackage{framed}
\newtcolorbox{MetaBox}{
  breakable,
  colback=green!5,      
  colframe=green!30,    
  boxrule=0.5pt,       
  arc=2pt,             
  left=5pt, right=5pt, top=5pt, bottom=5pt, 
  fontupper=\small 
}

\newtcolorbox{R1Box}{
  breakable,
  colback=blue!5,      
  colframe=blue!30,    
  boxrule=0.5pt,       
  arc=2pt,             
  left=5pt, right=5pt, top=5pt, bottom=5pt, 
  fontupper=\small 
}

\newtcolorbox{R3Box}{
  breakable,
  colback=orange!5,      
  colframe=orange!30,    
  boxrule=0.5pt,       
  arc=2pt,             
  left=5pt, right=5pt, top=5pt, bottom=5pt, 
  fontupper=\small 
}

\newtcolorbox{R4Box}{
  breakable,
  colback=red!5,      
  colframe=red!30,    
  boxrule=0.5pt,       
  arc=2pt,             
  left=5pt, right=5pt, top=5pt, bottom=5pt, 
  fontupper=\small 
}

\definecolor{shadecolor}{rgb}{0.92,0.92,0.92}

\colorlet{BLUE}{blue}
\colorlet{ORANGE}{orange!80!black}
\colorlet{RED}{red}
\begin{document}
\begin{CJK*}{UTF8}{gbsn}

\title{CLIP: Lightweight Cosine-Law-Based Inverted-List Pruning for IVF-Based Vector Search}

\author{Yitong Song}
\affiliation{
  \institution{Hong Kong Baptist University}
  \country{}
}
\email{yitong_song@hkbu.edu.hk}

\author{Shuhang Lu}
\affiliation{
  \institution{Shanghai Jiao Tong University}
  \country{}
}
\email{ts1989@sjtu.edu.cn}


\author{Pengcheng Zhang}
\affiliation{
  \institution{Tencent Inc.}
  \country{}
}
\email{petrizhang@tencent.com}

\author{Jianliang Xu}
\affiliation{
  \institution{Hong Kong Baptist University}
  \country{}
}
\email{xujl@comp.hkbu.edu.hk}

\begin{abstract}
\label{sec:abstract}
Vector search has become a core component of modern multimodal retrieval systems. Among existing methods, the inverted file (IVF)–based methods are widely adopted. However, they are fundamentally limited by coarse-grained execution: each query typically probes many clusters and exhaustively scans all vectors within them, resulting in high query latency. Prior works mitigate this using pruning strategies, but they often incur substantial extra pruning overhead, lack cluster-level pruning, and compromise update efficiency due to heavy maintenance of pruning metadata.

This paper proposes \ourMethod, a lightweight cosine-law–based pruning technique that supports both inter- and intra-cluster pruning, substantially reducing unnecessary cluster and vector accesses with negligible overhead.
First, \ourMethod exploits the monotonicity of cosine-law–based lower bounds, enabling eliminating an undesirable cluster in $O(1)$ time and filtering batches of irrelevant vectors in logarithmic time in the list size, with a tight analytical guarantee.
Second, building on this, we develop two IVF variants: \ourIVF, which integrates \ourMethod into IVFFlat, and \ourHIVF, which extends it with a hierarchical structure for adaptive sub-cluster probing. 
Third, for dynamic workloads, we present \ourLSM, an LSM-inspired design that supports fast updates by deferring index maintenance to background compaction, and enables efficient queries via \ourMethod-based optimizations that eliminate costly level-by-level searches.
Extensive experiments show that \ourMethod variants achieve up to 78\% pruning and 69\% higher efficiency over static IVF baselines, while \ourLSM improves throughput by up to 141\% over dynamic IVF baselines with comparable update efficiency.
\end{abstract}

\maketitle

\section{Introduction}
\label{sec:introduction}

Vector search aims to retrieve the top-$k$ most similar vectors from large-scale, high-dimensional datasets under similarity metrics such as Euclidean distance, typically formulated as Approximate Nearest Neighbor (ANN) search. It has become a cornerstone of modern multi-modal retrieval systems~\cite{ADBV, BlendHouse, PinnerSage}, supporting a broad range of applications such as image-to-image retrieval~\cite{ImageRetrieval}, music recognition~\cite{NowPlaying}, and code assistants~\cite{trae}. 

To support efficient ANN queries, Inverted File (IVF)–based methods have been extensively developed~\cite{spann, meta-faiss, CrackIVF, Quake, spfresh}. These methods operate by clustering the entire vector dataset into partitions that group similar vectors together. Each cluster is associated with an inverted list recording its assigned vectors. At query time, the top-$nprobe$ closest clusters are first identified by comparing their cluster centroids with the query vector. The vectors within these clusters are then scanned to compute exact distances and yield the top-$k$ most similar results. The parameter $nprobe$ controls the query efficiency and accuracy trade-off: larger values generally improve accuracy but reduce efficiency. Owing to their practical scalability, update friendliness, highly parallelizable structure, and satisfactory query performance, IVF-based methods have been widely adopted in real-world vector retrieval systems~\cite{Milvus, MicroNN, Quake, meta-faiss, spann, spfresh, GaussDB}.

\begin{table}[t]
  \centering
  \caption{Comparison of cluster access, data access, and pruning complexity at 99\% recall, with cluster count set to 4,096, where $l$ denotes the inverted list length and $subK$ is a constant. Legend: \xmark = Not supported, \bstar = Best performance.}
  \vspace{-0.1in}
  \label{tab:intro}
  \resizebox{\linewidth}{!}{
  \begin{tabular}{c|c|c|c|c|c|c|c}
    \toprule
    \multirow{2}{*}{Methods / Metrics} 
    & \multicolumn{2}{c|}{\makecell[c]{Accessed\\Cluster Count}} 
    & \multicolumn{2}{c|}{\makecell[c]{Accessed\\Data Ratio}} 
    & \multicolumn{3}{c}{\makecell[c]{Pruning Complexity\\(per Cluster)}} \\
    \cline{2-8}
     & GloVe & Deep 
     & GloVe & Deep 
     & Clus.-Prun. & Vec.-Prun. & Extra Stor. \\
    \hline
    \hline
    \makecell[c]{IVFFlat~\cite{meta-faiss}\\(traditional)} 
    & 500 & 400 
    & 35\% & 10\% 
    &  \multicolumn{3}{c}{\xmark} \\
    \hline
    \makecell[c]{Tribase~\cite{Tribase}\\(other pruning)}
    & 500 & 400 
    & 34\% & 10\% 
    & \xmark
    & $O(l)$
    & \makecell[c]{$O(l +$\\$l \cdot subK)$} \\
    \hline

    \makecell[c]{IVF-CLIP\\(ours)}
    & {\bf 441 \bstar} & {\bf 215 \bstar} 
    & {\bf 21\% \bstar} & {\bf 4\% \bstar} 
    & $\boldsymbol{O(1)}$ \bstar
    & $\boldsymbol{O(2\log l)}$ \bstar
    & $\boldsymbol{O(l)}$ \bstar \\
    \bottomrule
\end{tabular}}
\end{table}

\begin{figure}[!t]
    \vspace{-0.15in}
    \centering
    \includegraphics[width=\linewidth]{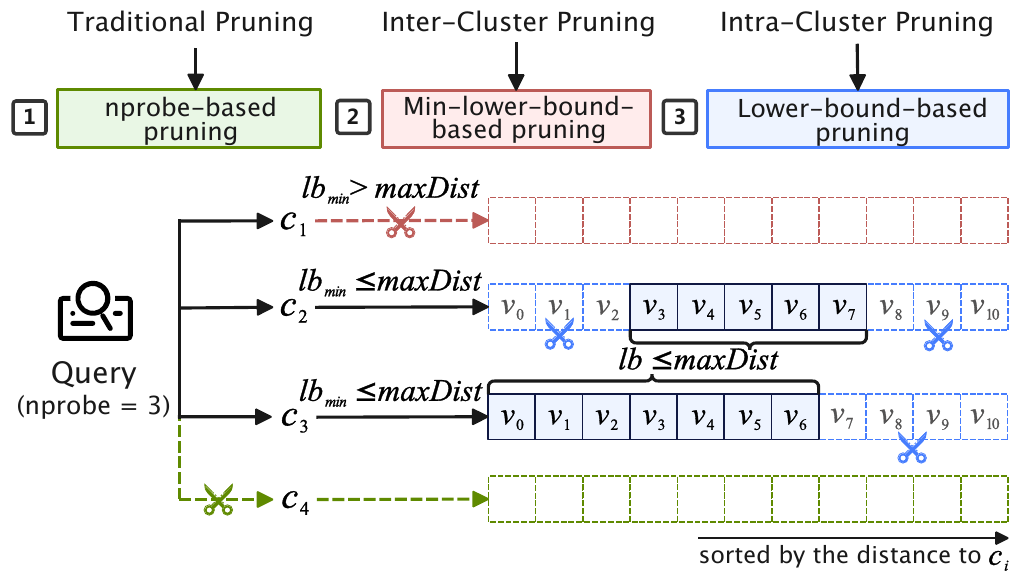}
    \vspace{-0.3in}
    \caption{Core pruning mechanism of CLIP.}
   \vspace{-0.25in}
    \label{fig:overview}
\end{figure}

Despite these advantages, IVF-based methods suffer from a fundamental bottleneck: they must access a large number of clusters and scan all vectors within them to perform expensive distance computations, leading to limited query efficiency.
As shown in Table~\ref{tab:intro}, achieving 99\% query recall requires the classic IVF-based method (\ivfflat) to probe 500 and 400 clusters, accessing 35\% and 10\% of the data vectors on the \glove and \deep datasets, respectively.
To alleviate this limitation, the recent work Tribase~\cite{Tribase} applies triangle-inequality–based pruning within each probed cluster, where all vectors in these clusters are evaluated individually and only the vectors that remain after pruning are then involved in distance computations.
However, Tribase still has several limitations:
(1) it offers no cluster-level pruning and still relies solely on $nprobe$ to control the number of accessed clusters; (2) it incurs substantial pruning overhead due to per-vector bound evaluation, resulting in $O(l)$ pruning complexity (where $l$ is the inverted list length) and $O(l+l \cdot subK)$ metadata storage;
(3) it shows low pruning ratios due to its rigorous triangle-inequality–based pruning conditions; and
(4) it undermines the update efficiency of IVF-based methods, as the widely-used in-place update strategy requires frequent cluster reorganization, with the stored metadata having to be re-computed accordingly, thereby increasing the online index maintenance cost.


Motivated by these, this paper proposes \ourMethod (\underline{\bf C}osine-\underline{\bf L}aw-based \underline{\bf I}nverted-list \underline{\bf P}runing), a lightweight method that enables both efficient and effective pruning in pure IVF structures. As illustrated in Figure~\ref{fig:overview}, beyond standard $nprobe$-based pruning, \ourMethod introduces two complementary strategies: inter- and intra-cluster pruning. Both rely on lower bounds derived from the Law of Cosines~\cite{cosineLaw}, which estimate the minimum possible squared distance between a query and a candidate vector. 
We detail them below.

\noindent
$\triangleright$
\uline{\textit{Inter-Cluster Pruning.}} Given $nprobe$ candidate clusters, we further identify pruning opportunities by estimating, for each cluster, the minimum lower bound \(lb_{\min}\) between the query vector \(q\) and any vector assigned to it. If \(lb_{\min}\) exceeds the current threshold \(maxDist\) (i.e., the squared distance of the current \(k\)-th nearest neighbor), the entire cluster can be safely pruned since all its vectors are guaranteed to be farther than \(maxDist\). Naïvely computing all individual lower bounds within a cluster to obtain \(lb_{\min}\) is expensive. Instead, we derive \(lb_{\min}\) in $O(1)$ time by exploiting the monotonicity of the cosine-law–based lower bound and closed-form expressions, enabling constant-time pruning. As shown in Table~\ref{tab:intro}, \ourMethod reduces the number of probed clusters from 500/400 to 441/215 on the two datasets, respectively, demonstrating effective cluster-level pruning.

\noindent
$\triangleright$
\uline{\textit{Intra-Cluster Pruning.}} For each candidate cluster that passes inter-cluster pruning, we retain only those in-cluster vectors whose $lb \leq maxDist$. Instead of computing the individual lower bound for each vector, we leverage the monotonicity of the bound again to efficiently locate all valid vectors.
Specifically, during index construction, each inverted list is sorted by centroid--vector distance, which is also stored with $O(l)$ extra space for query-time use. This induces a monotonic structure, implying that valid vectors form a contiguous segment within the sorted list.
At query time, we only need to locate the segment boundaries via two binary searches, retaining the in-range vectors and pruning the rest.
Overall, \ourMethod achieves vector-level pruning in $O(2\log l)$ time and substantially reduces data accesses (e.g., from 34\%/10\% to 21\%/4\% in Table~\ref{tab:intro}).

Beyond empirical results, we also provide theoretical analyses to explain the effectiveness of \ourMethod. In particular, we derive a quantile-based theoretical bound for the cosine-law–based lower bound, establishing its tightness in high-dimensional spaces and enabling principled control over the pruning tightness.
 
Building on \ourMethod, we develop two enhanced IVF variants, i.e., \ourIVF and \ourHIVF, based on the standard \ivfflat and the common hierarchical IVF method (HIVF)~\cite{spann, Quake}, respectively. Integrating \ourMethod into both variants requires only maintaining sorted inverted lists and storing centroid–vector distances, with negligible overhead from tightness parameters. \ourIVF is developed by directly integrating \ourMethod into \ivfflat, while \ourHIVF is further equipped with an adaptive query algorithm that selectively probes promising sub-clusters at each layer based on the efficiently estimated $lb_{\min}$ provided by \ourMethod. 
Our adaptive yet low-cost query algorithm improves upon existing approaches that either rely on simple heuristics to probe sub-clusters within a fixed distance range~\cite{spann} or employ costly models to estimate the likelihood that a sub-cluster contains query results~\cite{Quake}.

When handling dynamic workloads with both updates and queries, integrating \ourMethod improves query efficiency but increases update cost, since inverted list ordering and centroid–vector distances must be maintained. To address this, we propose \ourLSM, an LSM-inspired design that supports efficient online updates via a multi-level structure for fast in-memory writes, while deferring expensive index maintenance to background processing and keeping \ourMethod’s sorted lists through LSM-style merge sort operations. However, its multi-level structure requires level-by-level search and aggregation, which degrades query efficiency. 
We further eliminate these redundant searches by jointly identifying global top-$nprobe$ clusters and top-$k$ vectors across all \ourLSM levels in a single search procedure, and applying \ourMethod to remove unnecessary clusters and vectors.


To summarize, our contributions are as follows:
\vspace{-0.03in}
\begin{itemize}
\item We propose \ourMethod, a lightweight pruning method that enables (1) inter-cluster pruning to reduce the number of probed clusters, and (2) intra-cluster pruning to access only a small range of vectors within each inverted list. (Section~\ref{sec:clip})

\item We derive a quantile-based theoretical bound that explains the efficiency of \ourMethod and provides principled control over pruning tightness. The derived bounds are consistent with empirical observations. (Section~\ref{sec:clip-parameter})

\item Building on \ourMethod, we develop two efficient IVF variants, \ourIVF and \ourHIVF. In particular, \ourHIVF incorporates an adaptive query algorithm that selectively probes promising sub-clusters for each query, avoiding static heuristics and costly query-time estimation.  (Section~\ref{sec:integration})

\item To support dynamic workloads, we propose \ourLSM, an LSM-style design that utilizes \ourMethod-based optimizations to achieve both high update and query efficiency. (Section~\ref{sec:update})

\item Extensive experiments show that \ourIVF and \ourHIVF substantially outperform existing static IVF methods, achieving 69\% higher query efficiency and enabling pruning ratios of up to 78\% at the same recall.
\ourLSM improves throughput by up to 141\% over state-of-the-art dynamic IVF baselines with comparable update efficiency (Section~\ref{sec:exp}).

\end{itemize}

\section{Preliminaries}
This section formalizes the ANN query problem and reviews the IVF methods, providing necessary background for the paper.

\subsection{Problem Definition}
\label{sec:definition}
We begin with the exact $k$ nearest neighbor ($k$NN) query. Given a vector dataset $\mathcal{D}$ in $d$-dimensional space, a query vector $q$, and an integer $k$, the $k$NN query returns the $k$ vectors in $\mathcal{D}$ that are closest to $q$. Formally, it finds a subset $\mathcal{R}_{knn} \subseteq \mathcal{D}$ such that
$ |\mathcal{R}_{knn}| = k$ and $
\forall_{ v \in \mathcal{R}_{knn}}\forall_{o \in \mathcal{D} \setminus \mathcal{R}_{knn}}\Gamma(v,q) \leq \Gamma(o,q)$,
where $\Gamma(\cdot,\cdot)$ denotes a distance metric (e.g., Euclidean distance and inner product).
Due to the "curse of dimensionality", exact $k$NN search is prohibitively expensive in high-dimensional spaces. Recent research~\cite{HNSW, NSG, trim, DCO} focuses on the approximate nearest neighbor (ANN) search. By relaxing the accuracy requirement, ANN methods achieve substantial gains in efficiency. The accuracy is typically measured by the query recall, defined as
$Recall@k = |\mathcal{R}_{knn} \cap \mathcal{R}_{knn}'|/k$,
where $\mathcal{R}_{knn}$ and $\mathcal{R}_{knn}'$ represent the exact and approximate result sets, respectively.

\subsection{Review of Pure IVF Methods}
The inverted file (IVF) technique supports ANN search by partitioning the dataset into clusters using $k$-means~\cite{K-means} and maintaining an inverted list (also known as posting list) for each cluster to store its assigned vectors. During query processing, the query vector $q$ is first compared with all centroids to identify the $nprobe$ nearest clusters. Only vectors in these selected clusters are then scanned to compute distances to the query $q$, and the top-$k$ nearest vectors are returned. The parameter $nprobe$ controls the efficiency–accuracy trade-off: smaller values reduce computation but may degrade accuracy, while larger values improve accuracy at higher latency.

To further reduce search cost, the IVF index structure is often extended into \emph{hierarchical} or \emph{multi-layer} variants~\cite{spann,Quake}. These designs organize clusters hierarchically, either by recursively partitioning clusters into sub-clusters in a top-down manner or by recursively grouping existing clusters in a bottom-up manner. This hierarchical structure enables early pruning of irrelevant regions, thereby reducing both query-centroid distance comparisons and the number of scanned vectors during query processing.

\noindent
\textbf{Advantages and Limitations.} Due to their simple implementation, practical scalability, strong hardware parallelism (e.g., scanning independent inverted lists in parallel across CPU cores or GPUs), update friendliness (e.g., appending vectors to inverted lists), and satisfactory query performance, IVF methods are widely adopted in modern vector retrieval systems~\cite{meta-faiss, spann, Quake, MicroNN, Milvus, ADBV}. However, IVF structures provide only coarse-grained pruning: a large number of clusters must still be probed, and all vectors within the probed clusters need to be scanned during query processing, making this step a primary performance bottleneck. In addition, frequent updates can lead to imbalanced cluster sizes, potentially triggering costly online index maintenance operations such as re-partitioning~\cite{ada-ivf}.

\section{The \ourMethod Method}
\label{sec:clip}
This section presents \ourMethod to overcome the limitations of IVF indexes. We first provide an overview in Section~\ref{sec:clip-overview}, then describe the pruning strategy in Section~\ref{sec:clip-pruning}, and finally analyze its effectiveness and the parameters controlling pruning tightness in Section~\ref{sec:clip-parameter}.

\subsection{\ourMethod Overview}
\label{sec:clip-overview}
As illustrated in Figure~\ref{fig:overview}, in addition to the widely used $nprobe$-based pruning strategy that probes a fixed number of nearest clusters~\cite{meta-faiss, Tribase, CrackIVF, Quake}, \ourMethod further employs two pruning techniques, i.e., \emph{min-lower-bound-based pruning} for inter-cluster pruning and \emph{lower-bound-based pruning} for intra-cluster pruning. 
For each probed cluster, the min-lower-bound-based pruning strategy estimates the minimum lower bound (i.e., $lb_{min}$) on the squared distances between the query and all vectors in the cluster using the Law of Cosines~\cite{cosineLaw} and its monotonicity property. If $lb_{min}$ exceeds the threshold $maxDist$ (i.e., the squared distance of the $k$-th nearest neighbor found so far), the entire cluster can be safely skipped.
In contrast, lower-bound-based pruning operates at a finer granularity. Within a candidate cluster, it identifies a contiguous range of vectors whose lower bounds may fall below $maxDist$ and prunes all remaining vectors outside this range. The valid vector range can be efficiently determined via two binary searches to locate the start and end positions of qualifying vectors.

It is worth noting that, although \ourMethod relies on cosine-law–based lower bounds for pruning, both strategies are designed to operate \emph{without explicitly computing the lower bound} for each vector, thereby achieving high pruning efficiency. The next section presents the pruning strategies in detail.

\subsection{Pruning Strategies}
\label{sec:clip-pruning}
We first define the cosine-law-based lower bound and its monotonicity, and then show how it enables efficient pruning.

Let $c$, $q$, and $v$ denote the cluster centroid, query vector, and a data vector assigned to $c$, respectively. By the Law of Cosines~\cite{cosineLaw}, the squared distance between $q$ and $v$ can be written as:
\begin{equation*}
    \Gamma(q, v)^2 
    = \Gamma(q, c)^2 + \Gamma(c, v)^2 
      - 2 \cos\theta \, \Gamma(q, c)\Gamma(c, v),
\end{equation*}
where $\theta = \angle qcv$. Then, we provide the following theorem.

\begin{theorem}[Cosine-Law-Based Lower Bound]
\label{the:lb}
Let $\theta_{\min}$ denote the minimum possible angle among all admissible triplets $(q,c,v)$, and define $\lambda = \cos\theta_{\min}$. Then,
\begin{equation}
\label{equ:consineLB}
    lb_{qv}
    = \Gamma(q, c)^2 + \Gamma(c, v)^2 
      - 2\lambda \, \Gamma(q, c)\Gamma(c, v)
\end{equation}
is a valid lower bound of $\Gamma(q, v)^2$.
\end{theorem}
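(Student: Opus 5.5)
The plan is to start from the exact cosine-law identity stated just before Theorem~\ref{the:lb} and compare it term by term with $lb_{qv}$. Write $a=\Gamma(q,c)\ge 0$ and $b=\Gamma(c,v)\ge 0$, and let $\theta=\angle qcv\in[0,\pi]$ for the triplet under consideration. Subtracting Equation~\eqref{equ:consineLB} from the identity gives
\begin{equation*}
\Gamma(q,v)^2 - lb_{qv} = 2ab\,(\lambda-\cos\theta).
\end{equation*}
The claim therefore reduces to showing that $2ab\,(\lambda-\cos\theta)\ge 0$ for every admissible triplet.

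The key step uses monotonicity of the cosine on $[0,\pi]$. Since $(q,c,v)$ is admissible, the definition of $\theta_{\min}$ gives $\theta\ge\theta_{\min}$. Both angles lie in $[0,\pi]$, where $\cos$ is strictly decreasing, so $\cos\theta\le\cos\theta_{\min}=\lambda$. Hence $\lambda-\cos\theta\ge 0$. Together with $a,b\ge 0$, the difference is nonnegative, which gives $lb_{qv}\le\Gamma(q,v)^2$.

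Before that argument, I would dispose of the degenerate cases. If $q=c$ or $v=c$, the angle $\theta$ is undefined, but then $ab=0$ and $lb_{qv}=a^2+b^2=\Gamma(q,v)^2$ holds exactly. So the bound holds regardless of how $\theta$ is interpreted, and these cases can be set aside first.

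The main subtlety is interpretive rather than computational. One must make precise that $\theta_{\min}$ is taken over a set of triplets that contains the particular $(q,c,v)$ being bounded, and that angles are measured in $[0,\pi]$, as the principal value of $\arccos$ of the normalized inner product. I would state these conventions explicitly, since the monotonicity step fails outside $[0,\pi]$. Once they are fixed, the proof is a one-line consequence of the identity.
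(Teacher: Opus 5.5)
Your proof is correct and follows the same route as the paper: apply the cosine-law identity, then use $\theta \ge \theta_{\min}$ to get $\cos\theta \le \cos\theta_{\min} = \lambda$ on $[0,\pi]$. You also make explicit two points the paper's one-line proof leaves implicit: the nonnegative factor $2\,\Gamma(q,c)\Gamma(c,v)$ that justifies substituting $\lambda$ for $\cos\theta$, and the degenerate cases $q=c$ or $v=c$.
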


\begin{proof}
Since $\theta \ge \theta_{\min}$, we have $\cos\theta \le \lambda$. Substituting this into the cosine law yields the result.
\end{proof}

We defer the discussion of the bound’s validity in high-dimensional spaces and the choice of $\lambda$ to the next subsection, and now show how to exploit the monotonicity of this lower bound for efficient pruning without explicitly computing the bound for each vector. 

\begin{theorem}[Monotonicity of Cosine-Law-Based Lower Bound]
\label{the:monotonicity}
For fixed $q$, $c$, and $\lambda$, the cosine-law-based lower bound can be expressed as a function of $x = \Gamma(c,v)$:
\begin{equation}
\label{equ:lb}
    lb(x) = x^2 - 2\lambda\,\Gamma(q,c)\,x + \Gamma(q,c)^2
    = \big(x-\lambda \Gamma(q,c)\big)^2 + (1-\lambda^2)\Gamma(q,c)^2.
\end{equation}
Thus, $lb(x)$ is a convex quadratic function with a unique minimizer at $x^\star = \lambda\Gamma(q,c)$, attaining its minimum value $lb_{\min} = (1-\lambda^2)\,\Gamma(q,c)^2$, and is \emph{strictly decreasing} on $(-\infty, x^\star]$ and \emph{increasing} on $[x^\star, +\infty)$.
\end{theorem}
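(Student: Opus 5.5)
The plan is a direct algebraic verification, starting from the expression in Theorem~\ref{the:lb}. Fix $q$, $c$, and $\lambda$, and write $a = \Gamma(q,c) \ge 0$ and $x = \Gamma(c,v)$. Substituting into equation~\eqref{equ:consineLB} gives
$lb(x) = x^2 - 2\lambda a x + a^2$,
which is the first form in equation~\eqref{equ:lb}.

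Next, complete the square. Since $x^2 - 2\lambda a x = (x-\lambda a)^2 - \lambda^2 a^2$, we obtain
$lb(x) = (x-\lambda a)^2 + (1-\lambda^2)a^2$,
the second form claimed. The leading coefficient of $x^2$ is $1>0$, equivalently $lb''(x)=2>0$, so $lb$ is strictly convex. Its only stationary point solves $lb'(x) = 2(x-\lambda a) = 0$, namely $x^\star = \lambda a$. This is the unique global minimizer, with value $lb(x^\star) = (1-\lambda^2)a^2$; this value is $lb_{\min}$.

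For the monotonicity claims, use the sign of the derivative. $lb'(x) = 2(x - x^\star)$ is strictly negative for $x < x^\star$, so $lb$ is strictly decreasing on $(-\infty, x^\star]$. It is strictly positive for $x > x^\star$, so $lb$ is increasing (indeed strictly) on $[x^\star,+\infty)$. Alternatively, one can argue directly from the squared term $(x-\lambda a)^2$. For $x_1<x_2\le x^\star$ we have $|x_1 - x^\star| > |x_2-x^\star|$, which gives the strict decrease; the symmetric argument on the right gives the increase.

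There is no real obstacle here; the argument is elementary. The only point needing care is interpretation. Since $\lambda = \cos\theta_{\min} \in [-1,1]$, we have $1-\lambda^2 \ge 0$, so $lb_{\min}$ is nonnegative, consistent with bounding a squared distance. Also, because $x=\Gamma(c,v)\ge 0$ in practice, when $\lambda<0$ the minimizer $x^\star$ lies outside the feasible region. I will remark that the theorem is stated over all real $x$, so the claims hold as stated, and the restriction to $x\ge 0$ only matters for how the monotonicity is later exploited.
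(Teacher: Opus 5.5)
Your proposal is correct and takes the same approach as the paper, which only says the result follows directly from the quadratic form; your completing-the-square and derivative-sign argument is the routine verification behind that remark. Your side remarks, that $1-\lambda^2 \ge 0$ and that the minimizer falls outside $x \ge 0$ when $\lambda<0$, are accurate and do not affect the statement as posed over all real $x$.
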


\begin{proof}
The result follows directly from the quadratic form.
\end{proof}

By sorting data vectors within each inverted list according to the centroid-vector distance, i.e., $\Gamma(c,v)$, and additionally storing these distances, we enable the following pruning strategies.


\begin{figure}[!t]
    \centering
    \includegraphics[width=\linewidth]{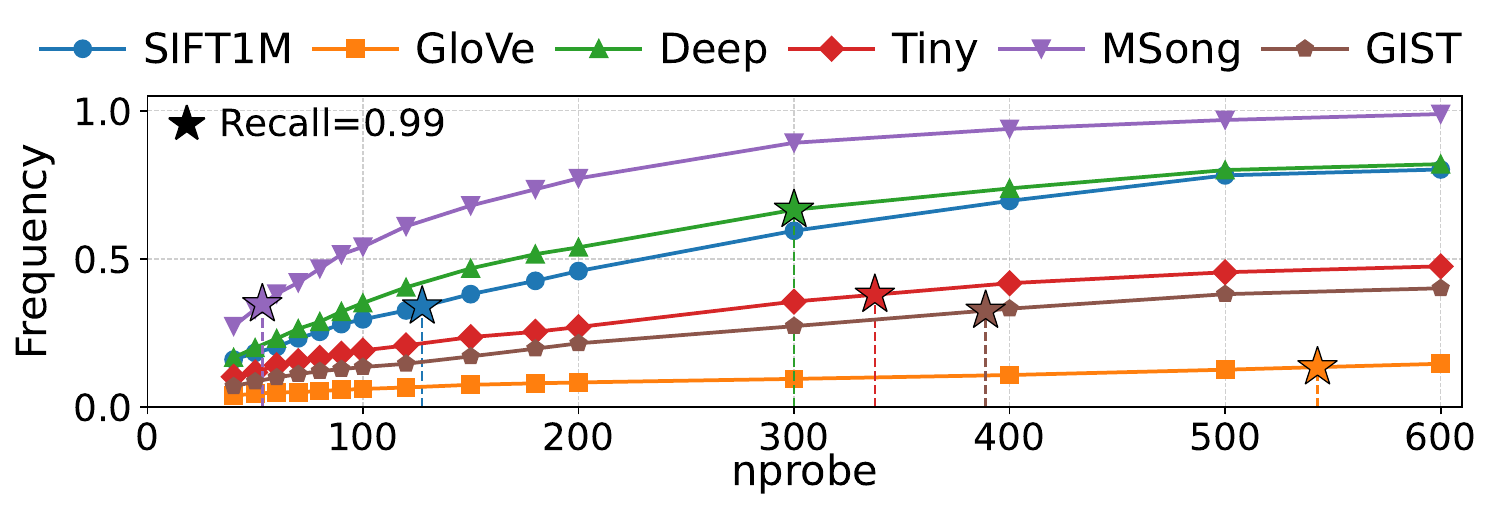}
    \vspace{-0.3in}
    \caption{Query frequency where the accessed cluster count inferred by inter-cluster pruning is smaller than $nprobe$.}
    \label{fig:nprobe}
\end{figure}

\noindent
{\bf Inter-Cluster Pruning (InterCP).} 
This pruning strategy eliminates unnecessary clusters among the $nprobe$ nearest candidate clusters. We first scan the data vectors in the nearest cluster to obtain initial top-$k$ results, and initialize $maxDist$ as the squared distance between the query and the current $k$-th nearest neighbor. 
Then, for each remaining cluster $c$, we compute a cluster-level minimum lower bound $lb_{\min}$, representing the smallest possible lower bound of the squared distance between the query and any vector in $c$, given by $lb_{\min} = (1-\lambda^2)\Gamma(q,c)^2$, as established in Theorem~\ref{the:monotonicity}. Here, $\lambda$ is precomputed, and $\Gamma(q,c)$ has already been obtained during the $nprobe$-based cluster selection phase. 
If $lb_{\min} > maxDist$, then for any vector $v \in c$, it holds that $\Gamma(q,v)^2 \ge lb_{qv} \ge lb_{\min} > maxDist$, implying that cluster $c$ cannot contribute to the top-$k$ result and can be pruned without accessing its list. This check incurs only $O(1)$ time per cluster and is applied to all $nprobe$ clusters.

Note that we first perform standard $nprobe$-based cluster selection and then apply InterCP within the selected $nprobe$ clusters. An alternative design is to rely solely on InterCP for cluster selection, i.e., accessing all clusters satisfying $lb_{\min} \leq \textit{maxDist}$ without explicitly specifying $nprobe$. 
However, this may lead to \emph{query-dependent} candidate clusters whose number is not fixed in advance, potentially increasing cluster access in certain cases. 
For example, as shown in Figure~\ref{fig:nprobe}, on the \deep datasets at 0.99 recall, using only InterCP, about $67\%$ of queries access fewer clusters than the $nprobe$-based method, while the remaining $33\%$ benefit from $nprobe$-based selection. Similar trends are observed across other datasets, indicating that neither strategy dominates uniformly across queries. 

We further observe that the clusters selected by InterCP differ from those selected by $nprobe$, reflecting their complementary nature: $nprobe$ relies solely on query–centroid distances, whereas InterCP leverages $lb_{\min}$, which accounts for all in-cluster vectors rather than only centroids.
These observations motivate a hybrid design that integrates $nprobe$-based selection with InterCP, combining strict control over the number of accessed clusters with adaptive query-dependent filtering. Importantly, this introduces almost no additional computational overhead, since all distances required by InterCP are already computed during the $nprobe$ phase. Empirically, applying InterCP after $nprobe$-based selection can further remove $6.7\%$–$46.2\%$ of the initially selected clusters (Section~\ref{sec:exp_static}), yielding an additional $10.7\%$–$29\%$ speedup (Section~\ref{sec:exp-clip}).

\noindent
{\bf Intra-Cluster Pruning (IntraCP).} 
This strategy prunes unnecessary data vectors inside each qualified cluster.
According to Equation~\ref{equ:consineLB}, the lower bound $lb_{qv}$ for each vector $v$ can be computed by re-using the query–centroid distance $\Gamma(q,c)$ and the pre-stored centroid–vector distance $\Gamma(c,v)$. 
A straightforward pruning approach would compute the lower bound for every vector in the cluster and compare it with $maxDist$ one by one, accessing only those with $lb_{qv} \leq maxDist$. 
However, this approach is inefficient since it requires $O(l)$ lower bound computations per cluster, where $l$ denotes the size of a probed cluster. Instead, we utilize the monotonicity of the cosine-law-based lower bound to identify qualified vectors without explicitly evaluating each lower bound. 

\begin{figure}[!t]
    \centering
    \includegraphics[width=\linewidth]{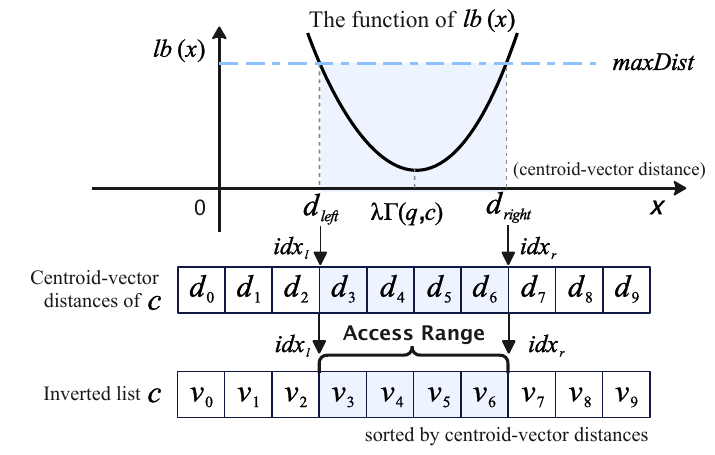}
   \vspace{-0.3in}
    \caption{Intra-cluster pruning of \ourMethod. The distances $d_{\text{left}}$ and $d_{\text{right}}$ are obtained as the two roots of $lb(x)=\textit{maxDist}$, where $lb(x)$ is defined in Equation~\ref{equ:lb}.
    Their corresponding positions $idx_l$ and $idx_r$ are located via two binary searches over the sorted centroid–vector distances. 
    Only vectors within the range $[idx_l, idx_r]$ are accessed.}
   \vspace{-0.1in}
    \label{fig:intra-cluster}
\end{figure}

Based on Theorem~\ref{the:monotonicity}, the lower bound first decreases and then increases with the centroid–vector distance $\Gamma(c,v)$, as shown in Figure~\ref{fig:intra-cluster}. Therefore, we can identify the two roots of $lb(x)=maxDist$, and any vectors whose centroid–vector distances fall between these two roots are guaranteed to have lower bounds smaller than $maxDist$. These vectors must be accessed for candidate evaluation. Since each inverted list is sorted in ascending order of centroid–vector distance, these vectors form a contiguous segment that can be efficiently located by identifying its boundary positions $\text{idx}_{\text{l}}$ and $\text{idx}_{\text{r}}$.

\begin{algorithm}[h]
\small
\caption{Intra-Cluster Pruning (IntraCP)}
\label{alg:icp}
\KwIn{Sorted distances $D_c=\{d_i|d_i = \Gamma(c,v_i), v_i \in \text{cluster c}\}$ for a candidate cluster $c$, $\Gamma(q,c)^2$, $\lambda$, $maxDist$}
\KwOut{Access range $[\text{idx}_{\text{l}}, \text{idx}_{\text{r}}]$}

Solve $lb(x)=maxDist$ to obtain $d_{\text{left}}$ and $d_{\text{right}}$ (Equation~\ref{equ:lb})\;
$\text{idx}_{\text{l}} \leftarrow \textsc{BinarySearch}(D_c, d_{\text{left}})$\; 
$\text{idx}_{\text{r}} \leftarrow \textsc{BinarySearch}(D_c, d_{\text{right}})$\;

\Return $[\text{idx}_{\text{l}}, \text{idx}_{\text{r}}]$
\end{algorithm}

Specifically, as illustrated in Algorithm~\ref{alg:icp}, the identification proceeds in three steps: (1) solving $lb(x)=maxDist$ to obtain two roots $d_{\text{left}}$ and $d_{\text{right}}$ ($d_{\text{left}} \le d_{\text{right}}$); (2) locating their corresponding positions $\text{idx}_{\text{l}}$ and $\text{idx}_{\text{r}}$ via binary search over the sorted centroid–vector distances; and (3) returning the index range $[\text{idx}_{\text{l}}, \text{idx}_{\text{r}}]$, within which all vectors satisfy $lb(x) \le maxDist$, while all remaining vectors are pruned without exact distance computation.

In practice, since nearby clusters are typically accessed, $\Gamma(q,c)$ is small and $d_{\text{left}}$ often becomes negative. In this case, $d_{\text{left}}$ is a mathematical artifact rather than a valid distance in the inverted list. Therefore, only a single binary search is needed to locate $d_{\text{right}}$, and all vectors with indexes from $0$ up to that position are accessed. In contrast, two binary searches are required only for far clusters where $d_{\text{left}} \ge 0$.

\subsection{Pruning Effectiveness and Parameters}
\label{sec:clip-parameter}
This section explains why \ourMethod is valid in high-dimensional spaces and how to set the key parameter $\lambda$ (defined as $\cos \theta_{\min}$). Here, $\theta_{\min}$ is the minimum possible value of the angle $\angle qcv$ over all feasible queries $q\in Q$, centroids $c$, and in-cluster vectors $v\in V_c$.

\begin{figure}[t]
  \centering
  \includegraphics[width=0.95\linewidth]{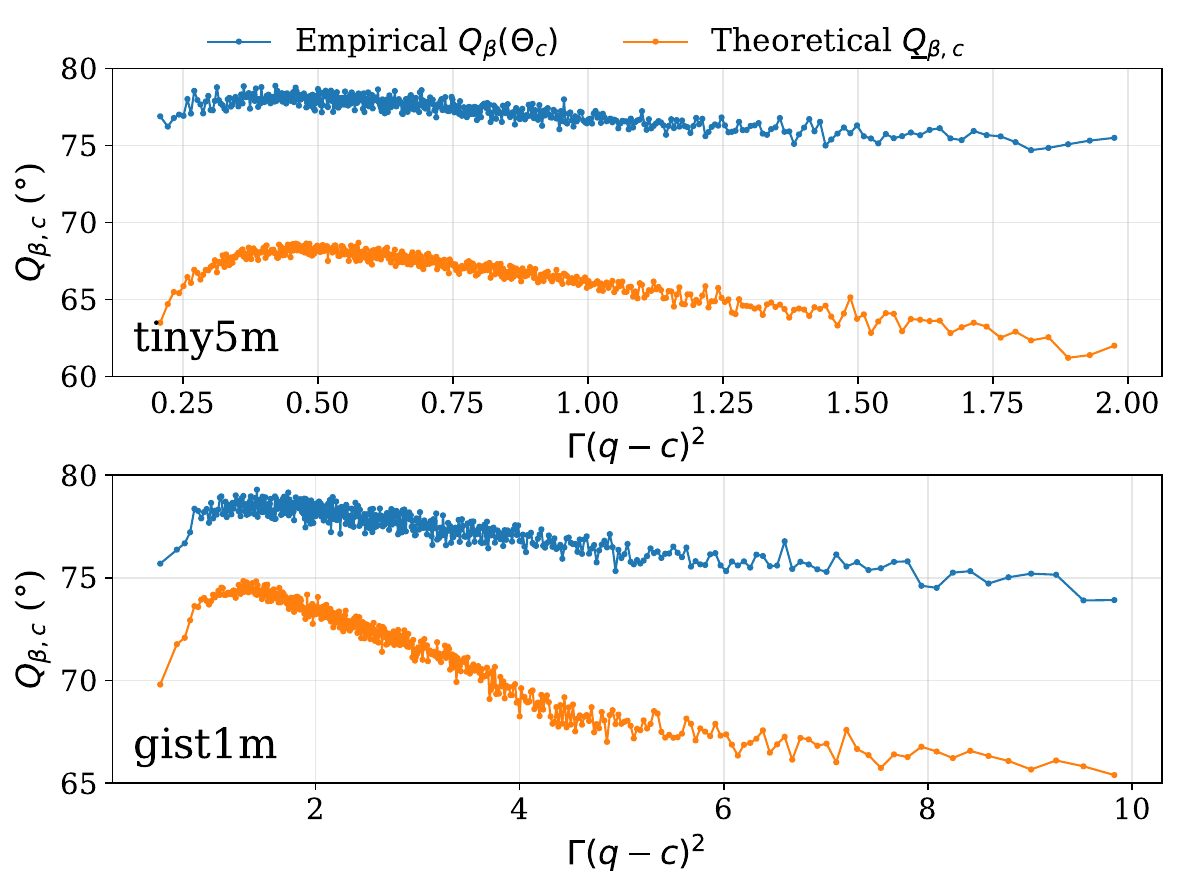}
  \vspace{-0.2in}
  \caption{Empirical $Q_\beta(\Theta_{c})$ values and their theoretical bound $\underline{Q}_{\beta,c}$ as a function of $\Phi=\Gamma(q-c)$, illustrating that (i) the bound consistently dominates the empirical values (Theorem~\ref{the:quantile:directional}), and (ii) it varies with $\Phi$, explicitly capturing the distance dependence (Theorem~\ref{the:quantile2}).}
  \label{fig:theta_min_vs_bound}
  \vspace{-0.2in}
\end{figure}

\noindent
\textbf{Pruning Effectiveness.}
\ourMethod employs the cosine-law–based lower bound for pruning, using a critical parameter $\theta_{\min} \leq \theta_{real}$ to ensure correctness. A larger $\theta_{\min}$ yields a tighter bound and thus stronger pruning. Since $\theta \in [0,\pi]$, $\theta_{\min}$ degenerates to $0$, reducing the bound to the traditional triangle-inequality–based form and making \ourMethod similar to prior methods such as iDistance~\cite{iDistance} that relies on distance ordering and triangle-inequality-based pruning. However, such bounds are proven to be weak in high-dimensional spaces due to the \emph{distance concentration} phenomenon, where pairwise distances become similar and the lower bounds approach 0~\cite{trim}.

In contrast, the cosine-law–based bound avoids this degeneracy with high probability due to a key high-dimensional property: random vectors are nearly orthogonal, i.e., pairwise angles concentrate sharply around $90^\circ$~\cite{Probability}. As a result, $\theta_{\min}$ typically remains close to $90^\circ$ rather than collapsing toward 0, leading to a tighter bound. 
Importantly, this concentration behavior of $\angle qcv$ (i.e., $\theta$) approximately persists even when $q$ and $v$ are not independent in the embedding space. This is because the angle is computed between residual vectors $\vec{cq}$ and $\vec{cv}$ rather than the original vectors. The centroid $c$ captures the dominant mean direction within the local cluster; subtracting $c$ therefore reduces the shared directional bias induced by the embedding distribution. Therefore, the residual vectors exhibit a more isotropic structure, making the high-dimensional angle concentration phenomenon still a reasonable approximation. 
Moreover, this concentration becomes stronger as the dimensionality increases, in the sense that the probability of $\theta_{\min}$ deviating significantly from $90^\circ$ decreases with dimension, typically at an exponential rate under standard concentration assumptions~\cite{Probability}.




We formalize this intuition in Theorem~\ref{the:quantile:directional}, which establishes a theoretical lower bound on the $\beta$-quantile of $\theta$, i.e., $Q_{\beta}(\Theta)$.  
We focus on $Q_{\beta}(\Theta)$, rather than $\theta_{min}$, for the following two reasons. First, computing the exact minimum angle would require exhaustively enumerating all feasible triplets $(q,c,v)$, which is computationally prohibitive. Second, the minimum value is statistically brittle: even a small number of outliers can drive it to an excessively low value, resulting in a loose bound. Instead, we consider a $\beta$-quantile of $\theta$ (e.g., the 0.001-quantile), which provides a robust and practically meaningful lower bound for pruning. Due to space constraints, the detailed proofs of all theorems in this section are provided in the accompanying technical report.\footnote{
\url{https://github.com/SongYitong826/CLIP/blob/main/Technical\%20Report.pdf}
}

\begin{theorem}[Lower Bound on $\beta$-Quantile of $\theta$]
\label{the:quantile:directional}
Given cluster centroid $c$, for any possible query $q\in Q$ and any in-cluster vector $v\in V_c$, define the angle set: 
\begin{equation*}
    \Theta_c := \{\angle qcv \in [0,\pi]\mid q\in Q,\ v\in V_c\}.
\end{equation*}

Let $Q_{\beta}(\Theta_c)$ denote the $\beta$-quantile of angles within $\Theta_c$. There exists a lower bound $\underline{Q}_{\beta,c}$ such that 
\begin{equation*}
Q_{\beta}(\Theta_c)  \ge  \underline{Q}_{\beta,c} = arccos
 \min \left\{
 1,
\left[B_{\beta}(c)
\sqrt{u^\top\Sigma_c u}.\right]
\right\},
\end{equation*}
where $u=\frac{q-c}{\|q-c\|}$ is the query direction, $\Sigma_c$ depends only on $c$, and $B_\beta(c)$ depends on $\beta$ and $c$.
\end{theorem}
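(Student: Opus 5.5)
We prove the bound by fixing the query direction and reducing the claim to a one-dimensional concentration bound on $\cos\angle qcv$. Fix $c$ and a query $q$, and write $u=(q-c)/\|q-c\|$. For an in-cluster vector $v\in V_c$, set the residual $r=v-c$. Then
\begin{equation*}
\cos\angle qcv=\frac{u^\top r}{\|r\|}.
\end{equation*}
We treat $v$ as drawn from the empirical (or modeled) in-cluster distribution, and take $\Sigma_c$ to be the second-moment (covariance) matrix of the normalized residual $r/\|r\|$ around $c$; this depends only on $c$. The scalar $Z=u^\top r/\|r\|$ then has mean approximately zero, since $c$ is the centroid, and variance $u^\top\Sigma_c u$. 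The angle map $\theta=\arccos Z$ is decreasing. So a lower bound on the $\beta$-quantile of $\theta$ is equivalent to an upper bound on the $(1-\beta)$-quantile of $Z$:
\begin{equation*}
Q_\beta(\Theta_c)\ \ge\ \arccos\bigl(Q_{1-\beta}(Z)\bigr).
\end{equation*}
We may clip at $1$ because $Z\le 1$ always, which accounts for the $\min\{1,\cdot\}$ in the statement.

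Next we bound $Q_{1-\beta}(Z)$ by $B_\beta(c)\sqrt{u^\top\Sigma_c u}$ using a tail inequality for the standardized variable $Z/\sqrt{u^\top\Sigma_c u}$. The first choice is distribution-free, via Cantelli's one-sided Chebyshev inequality:
\begin{equation*}
P(Z\ge t)\le \frac{\sigma^2}{\sigma^2+t^2}.
\end{equation*}
Setting the right-hand side equal to $\beta$ gives $t=\sigma\sqrt{(1-\beta)/\beta}$. Hence the inequality holds with $B_\beta=\sqrt{(1-\beta)/\beta}$, plus a correction term if the residual mean $\mu_c$ is nonzero. The $c$-dependence of $B_\beta(c)$ absorbs that correction by standardizing the mean shift $u^\top\mu_c$; we can bound it uniformly as $\|\mu_c\|/\sigma_{\min}$.

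If the residuals of cluster $c$ are assumed sub-Gaussian, a sharper constant $B_\beta(c)=K_c\sqrt{2\ln(1/\beta)}$ follows from the sub-Gaussian tail, with $K_c$ the sub-Gaussian norm ratio. Either constant gives the stated form. Since the statement only asserts the existence of $B_\beta(c)$, the Cantelli route already suffices.

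The main obstacle is that $\Theta_c$ ranges over all queries $q\in Q$ jointly with $v$, while the argument above fixes $u$. We handle this by reading the quantile conditionally on the query direction, which matches the $u$ appearing in the statement's bound, and we apply the bound pointwise in $u$. If an unconditional quantile over $Q\times V_c$ is required, we instead mix over the query distribution: the per-$u$ tail bound holds for every $u$, so averaging over $u$ gives
\begin{equation*}
P(Z\ge t)\le\beta
\end{equation*}
with $t=\sup_u B_\beta\sqrt{u^\top\Sigma_c u}$. A second subtlety is the normalization by $\|r\|$; using normalized residuals in defining $\Sigma_c$ sidesteps it. Finally, the high-dimensional near-orthogonality discussed above appears through $u^\top\Sigma_c u\approx \operatorname{tr}(\Sigma_c)/d=1/d$ for near-isotropic residuals. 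This makes the bound close to $\pi/2$, which we would remark on but do not need for the proof.
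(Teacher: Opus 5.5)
After one repair (below), your argument proves the statement as literally written, and reading the quantile for a fixed query direction $u$ is the right way to make sense of it. Your route differs from the paper's in a way that matters.

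The paper defers its proof to a technical report, but its $\Sigma_c$ is the covariance of the cluster distribution itself: Theorem~\ref{the:quantile2} instantiates it as $\sigma_{\perp,c}^2 I+\Delta_c\hat c\hat c^\top$, in squared-distance units. There $\sqrt{u^\top\Sigma_c u}$ is the spread of the unnormalized projection $u^\top(v-c)$, which is genuinely centered because $c$ is the centroid. Since a cosine is scale-invariant, $B_\beta(c)$ must carry an inverse length, presumably a lower bound on $\|v-c\|$ for the cluster, alongside a tail quantile of the projection; that would explain why it depends on $c$.

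You instead fold the normalization into $\Sigma_c$, which removes the denominator and lets a distribution-free inequality finish. That buys a model-free argument valid for any finite cluster. It costs sharpness, and it loses the direct link to Theorem~\ref{the:quantile2}, whose $\sigma_{\perp,c}^2$ and $\Delta_c$ are variances of $v-c$, not of $(v-c)/\|v-c\|$.

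The loss of sharpness is severe. At $\beta=10^{-3}$ your factor is about $31.6$, and your normalized matrix has trace essentially $1$. So for a typical direction $u^\top\Sigma_c u\approx 1/d$, and $B_\beta\sqrt{u^\top\Sigma_c u}\approx 31.6/\sqrt d\ge 1$ for every dimension in the paper ($d\le 960$). Your bound therefore collapses to $\arccos 1=0$, while the paper's plotted bound stays well away from $0^\circ$. Your closing claim that the bound sits near $\pi/2$ needs a constant of order $\sqrt{\ln(1/\beta)}$.

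When $\Sigma_c$ is positive definite, any $B_\beta(c)\ge\lambda_{\min}(\Sigma_c)^{-1/2}$ already makes the statement trivially true. So all the content lies in the size of $B_\beta(c)$, and that is where your shortcut stops helping. Sub-Gaussianity of $r$ does not transfer to $u^\top r/\|r\|$ with a proxy comparable to $u^\top\Sigma_c u$ unless you also bound $\|r\|$ from below. That is precisely the denominator step the paper's formulation forces.

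The repair concerns the mean. That $c$ is the centroid gives $\mathrm{E}[r]=0$, not $\mathrm{E}[r/\|r\|]=0$. Your patch $\|\mu_c\|/\sigma_{\min}$ fails when $\Sigma_c$ is singular, which is the usual situation for empirical clusters here: about $244$ vectors per list for $10^6$ vectors in $4096$ lists, against $d$ up to $960$. Worse, if $\Sigma_c$ is the centered covariance, take a null direction $u$ with $u^\top\mu_c>0$. Every $v$ then has $Z=u^\top\mu_c>0$, so the claimed value $\arccos 0=\pi/2$ is false whatever $B$ is.

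Commit instead to the uncentered matrix $\Sigma_c=\mathrm{E}[rr^\top/\|r\|^2]$ and use $\Pr(Z\ge t)\le\Pr(Z^2\ge t^2)\le u^\top\Sigma_c u/t^2$. Alternatively, use Cantelli together with $|\mathrm{E}Z|\le\sqrt{u^\top\Sigma_c u}$. Either gives a constant ($B_\beta=\beta^{-1/2}$ in the first case) with no assumption on the mean or on invertibility. Null directions then have $Z\equiv 0$, where the bound $\pi/2$ holds with equality.
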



Figure~\ref{fig:theta_min_vs_bound} empirically validates Theorem~\ref{the:quantile:directional}. On both datasets, the empirical quantile $Q_\beta(\Theta)$ is consistently bounded by the theoretical estimate, while the theoretical bound itself remains well separated from $0^\circ$ (with empirical values lying even further away). This confirms that \ourMethod does not degenerate to the triangle-inequality–based method and can provide stronger pruning effectiveness.

\noindent
\textbf{Parameter Determination.} Based on the above analysis, we set the key parameter $\lambda$ of \ourMethod to $\cos Q_{\beta}(\Theta)$. While this choice may incur a slight loss in accuracy, such a trade-off is well accepted in vector search, where a small loss in accuracy is often exchanged for a substantial improvement in query efficiency. 

To determine an effective value of $\lambda$, we make two key observations. First, although the theoretical bound closely tracks the empirical quantile and consistently provides a valid lower bound, it remains conservative and exhibits a non-negligible gap to empirical values, which limits pruning effectiveness. Second, as shown in both Theorem~\ref{the:quantile2} and Figure~\ref{fig:theta_min_vs_bound}, $Q_{\beta}(\Theta)$ depends on the query–centroid squared distance $\Gamma(q,c)^2$, implying that a single global $\lambda$ shared across all queries and clusters is inherently suboptimal.

\begin{theorem}[Distance Dependence of $Q_{\beta}(\Theta)$]
\label{the:quantile2}
Under the same setting as in Theorem~\ref{the:quantile:directional}, consider angles conditioned on the query--centroid distance $\Phi=\Gamma(q,c)$, denoted as $\Theta_c(\Phi):=\{\theta \in \Theta_c \mid \Gamma(q,c)=\Phi\}$.
Assume a locally anisotropic Gaussian model for the cluster distribution:
$\Sigma_c=\sigma_{\perp,c}^2 I + \Delta_c \hat{c}\hat{c}^\top$, $\hat{c} = \frac{c}{\|c\|}$,
where $\sigma_{\perp,c}^2$ and $\sigma_{\parallel,c}^2$ are the variances orthogonal and parallel to $\hat{c}$, respectively, and $\Delta_c=\sigma_{\parallel,c}^2-\sigma_{\perp,c}^2$. Then, the $\beta$-quantile of $\Theta_c(\Phi)$ admits a distance-dependent lower bound:
\[
Q_{\beta}(\Theta_c(\Phi)) \ge \underline{Q}_{\beta,c}(\Phi)
= \arccos\!\left(
\min\left\{
1,\;
B_{\beta}(c)\,
\sqrt{
\sigma_{\perp,c}^2
+
\Delta_c\, G(\Phi^2)
}
\right\}
\right),
\]
where $G(\Phi^2)$ is a non-decreasing function of $\Phi^2$ capturing the effect of query–centroid distance on the angular variance.
\end{theorem}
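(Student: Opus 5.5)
The plan is to reduce the statement to Theorem~\ref{the:quantile:directional} and then control the directional variance $u^\top\Sigma_c u$ under the anisotropic model. Theorem~\ref{the:quantile:directional} gives, for each fixed query direction $u=(q-c)/\|q-c\|$,
\[
Q_\beta(\Theta_c)\ \ge\ \arccos\min\{1,\,B_\beta(c)\sqrt{u^\top\Sigma_c u}\}.
\]
Conditioning on $\Gamma(q,c)=\Phi$ only restricts the set of admissible directions $u$. So the same argument applies with $u^\top\Sigma_c u$ replaced by a bound valid for directions compatible with distance $\Phi$. Since $\arccos$ is decreasing and $\min\{1,\cdot\}$ is monotone, any upper bound on $u^\top\Sigma_c u$ over the conditioned set yields a valid lower bound on the quantile.

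The first step is the substitution of the model. With $\Sigma_c=\sigma_{\perp,c}^2 I+\Delta_c\hat c\hat c^\top$ and $\|u\|=1$,
\[
u^\top\Sigma_c u=\sigma_{\perp,c}^2+\Delta_c\,(u^\top\hat c)^2 .
\]
The whole dependence on the query therefore sits in the squared alignment $(u^\top\hat c)^2$ between the residual direction and the centroid direction. I would then define
\[
G(\Phi^2):=\sup\{(u^\top\hat c)^2 : \|q-c\|=\Phi,\ q\in Q\},
\]
or, in a distributional version, the upper quantile or conditional mean of $(u^\top\hat c)^2$ given $\Phi$. Plugging $\sigma_{\perp,c}^2+\Delta_c G(\Phi^2)$ into the bound from Theorem~\ref{the:quantile:directional} gives the stated form, provided $\Delta_c\ge 0$. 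If $\Delta_c<0$, I would take the infimum instead, so that the sign of the product $\Delta_c G$ still yields an upper bound on the variance; the main case of interest is $\Delta_c\ge 0$.

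The main obstacle is showing that $G$ is non-decreasing in $\Phi^2$. My approach is to decompose $q-c=\alpha\hat c+w$ with $w\perp\hat c$, where queries follow the same model as data. Then $\alpha$ has variance $\sigma_{\parallel,c}^2$ and $w$ is isotropic with variance $\sigma_{\perp,c}^2$ per coordinate. This gives $(u^\top\hat c)^2=\alpha^2/(\alpha^2+\|w\|^2)$.

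Conditioning on $\alpha^2+\|w\|^2=\Phi^2$, I would argue that:
\begin{itemize}
\item $\|w\|^2$ concentrates near $(d-1)\sigma_{\perp,c}^2$, so it is essentially fixed;
\item the conditional share of $\alpha^2$ in the total is therefore $(\Phi^2-\|w\|^2)_+/\Phi^2$, which is non-decreasing in $\Phi^2$.
\end{itemize}

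More rigorously, I would use the monotone likelihood ratio of the conditional law of $\alpha^2$ given the sum (a ratio of scaled chi-square densities when $\sigma_\parallel>\sigma_\perp$). This gives stochastic monotonicity of $\alpha^2/\Phi^2$ in $\Phi^2$, and hence monotonicity of $G$. If the exact Gaussian conditioning proves too delicate, the fallback is to define $G$ directly as the conditional $(1-\beta)$-quantile of $(u^\top\hat c)^2$, and to prove monotonicity only in the concentrated high-dimensional regime via the approximation above.
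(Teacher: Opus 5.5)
The reduction $u^\top\Sigma_c u=\sigma_{\perp,c}^2+\Delta_c(u^\top\hat c)^2$ is right, and so is your monotonicity mechanism. If $q-c=\alpha\hat c+w\sim N(0,\Sigma_c)$, then conditionally on $\Phi^2=s$ the ratio $R=(u^\top\hat c)^2=\alpha^2/s$ has density proportional to $r^{-1/2}(1-r)^{(d-3)/2}e^{\kappa s r}$ on $(0,1)$, with $\kappa=\Delta_c/(2\sigma_{\perp,c}^2\sigma_{\parallel,c}^2)$. This is an exponentially tilted $\mathrm{Beta}(1/2,(d-1)/2)$ law, stochastically increasing in $s$ when $\Delta_c>0$. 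Use this exact computation rather than the concentration heuristic, which freezes $\|w\|^2$ at $(d-1)\sigma_{\perp,c}^2$ even though conditioning on the sum changes its law.

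The gap is that the two halves of your argument concern different functions $G$. You justify the bound for $G=\sup$. But under the very model you use for monotonicity, the query law has full support, so the sphere $\|q-c\|=\Phi$ contains $c+\Phi\hat c$ and this $G$ is identically $1$. It is trivially monotone, but the bound then degenerates to the $\Phi$-independent worst-direction bound with $\sigma_{\parallel,c}^2$; for a finite $Q$ the sup need not be monotone at all. The MLR argument, in turn, proves monotonicity only for the conditional mean or quantiles of $R$, and for those you have not shown that the bound holds.

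That step is not automatic. Theorem~\ref{the:quantile:directional} is a per-direction statement, while $\Theta_c(\Phi)$ pools directions, so the relevant tail is $\Pr(\theta<t)=\mathrm{E}_u\bigl[\Pr_v(\theta<t\mid u)\bigr]$. Plugging in the conditional mean of $R$ is not valid in general: in the tail region relevant for small $\beta$, a Gaussian-type tail probability is convex in the variance, so averaging the variance underestimates the pooled tail. Taking $G(\Phi^2)$ to be the conditional $(1-\beta_1)$-quantile of $R$ does work after a union bound. Directions with $R>G(\Phi^2)$ carry mass at most $\beta_1$. Every other direction has $u^\top\Sigma_c u\le\sigma_{\perp,c}^2+\Delta_c G(\Phi^2)$, so it contributes at most $\beta-\beta_1$, provided you use the constant $B_{\beta-\beta_1}(c)$ in place of $B_\beta(c)$.

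Two further points need to be made explicit. First, state as a hypothesis that queries themselves follow $N(c,\Sigma_c)$; the theorem only models the cluster. Second, for $\Delta_c<0$ the tilt reverses: the lower quantile of $R$ you would then need is non-increasing in $\Phi^2$, so a non-decreasing $G$ exists only in degenerate form (e.g.\ $G\equiv 0$). What is monotone for either sign of $\Delta_c$ is the variance proxy $\sigma_{\perp,c}^2+\Delta_c G(\Phi^2)$.
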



Based on these findings, we estimate $\lambda$ using empirical quantiles and make it adaptive to $\Gamma(q,c)^2$, instead of relying on the conservative theoretical bound. 
Specifically, we first sample a set of triplets $(q,c,v)$ and partition them into $p$ equal-width slices over $[\Gamma(q,c)_{\min}^2, \Gamma(q,c)_{\max}^2]$. When query vectors are unavailable, we approximate them using data vectors.Then, for each slice, we compute a slice-specific $\lambda = \cos Q_{\beta}(\Theta)$, where $Q_{\beta}(\Theta)$ is the $\beta$-quantile of $\theta$ within the sampled triplets falling into that slice.
During querying, the appropriate $\lambda$ is selected based on the slice index $\lfloor (\Gamma(q,c)_{\max}^2 - \Gamma(q,c)^2)/\textit{slice\_len} \rfloor$ for pruning, where $\textit{slice\_len} = (\Gamma(q,c)_{\max}^2 - \Gamma(q,c)_{\min}^2)/p$. This design replaces overly conservative theoretical bounds with empirical estimates, while capturing the dependence on $\Gamma(q,c)^2$ via distance-aware partitioning. Both $\beta$ and $p$ influence the query recall. Empirically, we observe that $\beta=0.001$ and $p=20$ consistently achieve high recall (Figures~\ref{fig:Parameter}).

\section{\ourMethod Integration}
\label{sec:integration}
In this section, we integrate \ourMethod into the IVFFlat structure (Section~\ref{sec:IVFCLIP}) and the hierarchical IVFFlat (HIVF) structure (Section~\ref{sec:HIVFCLIP}), yielding the variants \ourIVF and \ourHIVF, respectively.

\subsection{\ourIVF}
\label{sec:IVFCLIP}
When integrating \ourMethod into IVFFlat, both the index preparation and query processing pipelines require minor modifications.

\noindent
\textbf{Index Preparation.} \ourIVF follows the standard IVF construction pipeline, with the only modification being that inverted lists are sorted based on the centroid–vector distances. It also stores two auxiliary components: $(1)$ the sorted centroid–vector distances and $(2)$ the $\lambda$ values.
The centroid–vector distances are computed and stored directly during the clustering phase, incurring almost no extra computational cost. The values of $\lambda$ are obtained via the sampling method described in Section~\ref{sec:clip-parameter}, which is performed after the IVF index is built. As confirmed by our experiments (Figure~\ref{fig:index}), the overhead of computing these auxiliary statistics is negligible, as the clustering procedure itself dominates the index building cost. 

The additional storage overhead of \ourMethod is approximately $4n$ bytes for storing centroid–vector distances, where $n$ is the dataset size. Since the storage cost of $\lambda$ depends only on the number of slices $p$ (typically no more than 20), it is negligible and thus omitted in the subsequent space complexity analysis.

\noindent
\textbf{Query Processing.} As shown in Algorithm~\ref{alg:ivfclip}, \ourIVF begins by selecting the top-$nprobe$ clusters closest to the query vector $q$, identical to the standard IVF procedure (Line~3). For each probed cluster $c_i$, it locates the corresponding $\lambda$ and computes the minimum possible lower bound $lb_{\min}$ based on the already computed distance $\Gamma(q,c_i)^2$ (Lines~5--6). If $lb_{\min} > maxDist$, the entire cluster is discarded. Otherwise, \ourIVF proceeds with IntraCP (Lines~7--11): the two distance thresholds $d_{\text{left}}$ and $d_{\text{right}}$ are derived from Equation~\ref{equ:lb}, and their positions in the sorted inverted list are efficiently located via binary searches. Only vectors within this range are accessed for distance computations and updating the result queue $\mathcal{R}$. After all $nprobe$ clusters have been processed, $\mathcal{R}$ is returned.


\begin{algorithm}[t]
  \caption{\ourIVF($q$, $k$, $nprobe$)}
  \label{alg:ivfclip}
  \small
  \KwIn{query vector $q$, parameter $k$, probed cluster count $nprobe$}
  \KwOut{$k$ nearest neighbors of $q$}
  
  Result queue $\mathcal{R} = \emptyset$ (capacity $k$)\;
  $maxDist \leftarrow +\infty$ \tcp*{current $k$-th nearest distance}
  
  $C \leftarrow$ $nprobe$ nearest centroids \tcp*{$\Gamma(q,c_i)^2$ is computed here}
  \For{each centroid $c_i \in C$}{
    $\lambda \leftarrow \textsc{GetLambda}(\Gamma(q, c_i)^2)$ \tcp*{locate the slice}
    $lb_{\min} \leftarrow (1 - \lambda^2)\,\Gamma(q, c_i)^2$\;

    \If(\tcp*[f]{InterCP}){$lb_{\min} \le maxDist$}{
      Get sorted distances $D_i$ for cluster $c_i$\;
      $\text{idx}_{\text{l}}, \text{idx}_{\text{r}}$ = \text{IntraCP}($D_i$, $\Gamma(q, c_i)^2$, $\lambda$, $maxDist$) \tcp*{\footnotesize{Alg.~1}}
      \For{$t$ in $[\text{idx}_{\text{l}}\, \text{idx}_{\text{r}}]$}{
        \textsc{UpdateTopK}($\mathcal{R}$, $\Gamma(q, \text{cur\_list}[t])^2$, $maxDist$)\;
        }
    }
  }
  \Return{$\mathcal{R}$}\;
\end{algorithm}

\noindent
\textbf{Time Complexity.}
The additional time complexity introduced by \ourMethod is $O(nprobe + 2 \cdot nprobe \cdot \log l)$, where the $O(nprobe)$ term corresponds to computing $lb_{\min}$ for each candidate cluster, and the $O(2 \cdot nprobe \cdot \log l)$ term arises from the two binary searches performed on candidate inverted lists, with $l$ denoting the average list length. In contrast to existing pruning methods~\cite{trim, Tribase} that compute a lower bound for every candidate vector and incur a cost of $O(nprobe \cdot l)$, \ourMethod substantially reduces pruning overhead. 

\subsection{\ourHIVF}
\label{sec:HIVFCLIP}
HIVF organizes clusters hierarchically by recursively partitioning coarse clusters into finer sub-clusters. Query processing follows a top-down traversal, where a subset of clusters is explored at each layer and only vectors in the reached leaf clusters are accessed. A key challenge in HIVF is accurately identifying which sub-clusters should be probed. Existing methods either rely on simple heuristics that select sub-clusters within a predefined distance range~\cite{spann}, or adopt complex probabilistic models~\cite{Quake} to estimate the likelihood of a sub-cluster containing query results, often suffering from either suboptimal accuracy or high computational overhead.

In this section, we present \ourHIVF, which integrates HIVF with \ourMethod to adaptively and efficiently select sub-clusters at each layer using InterCP and IntraCP, and further reduces the number of scanned vector within leaf clusters via IntraCP. The index construction and query processing are detailed below.

\noindent
\textbf{Index Preparation.} \ourHIVF is constructed in a bottom-up manner, where finer clusters at a lower layer are recursively grouped to coarser clusters at higher layers. During this process, sub-clusters are ordered according to the distances between their centroids and their parent centroids, and these distances are stored for later use in computing lower bounds between queries and sub-cluster centroids.
After the HIVF index is built, the corresponding $\lambda$ values are computed and maintained at each layer. Similar to \ourIVF, the overhead of computing this extra information is negligible (Figure~\ref{fig:index}) as the clustering procedure dominates the index building cost. The additional storage cost is approximately $4 \times (n + n_c)$ bytes, where $n_c$ denotes the number of clusters across all HIVF layers.

\begin{algorithm}[t]
  \caption{\ourHIVF($q$, $k$, $nprobe$)}
  \label{alg:hivfclip}
  \small
  \KwIn{query vector $q$, parameter $k$, probed cluster count $nprobe$}
  \KwOut{$k$ nearest vectors of $q$}
  
  Result queue $\mathcal{R} = \emptyset$ (capacity $k$); 
  \tcp{\footnotesize $\mathcal{S}$ is sorted by the first key}
  Candidate cluster queue $\mathcal{S} = \{\text{tuple}(\text{root}.lb_{\min}, \text{root}, \Gamma(q,\text{root})^2)\}$\;
  $maxDist \leftarrow +\infty$ \tcp*{current top-$k$ vector dist.}
  $maxCD \leftarrow +\infty$ \tcp*{current top-$nprobe$ leaf cluster dist.} 
  $\text{cur\_{nprobe}} \leftarrow 0$ \tcp*{current accessed leaf cluster count} 
  \While{$\mathcal{S}$ is not empty and $\text{cur\_{nprobe}} <nprobe$}
  {
    $\text{cur\_tuple} \leftarrow$ $\mathcal{S}$.pop()\; 
    \If(\tcp*[f]{IntraCP}){$\text{cur\_tuple}$ denotes a leaf cluster}
    {
       Perform Lines~8--11 in Algorithm~\ref{alg:ivfclip}; $\text{cur\_{nprobe}++}$;
    }
    \Else
    {
       \If(\tcp*[f]{InterCP}){$\text{cur\_tuple}.lb_{\min} \leq maxCD$}
       {
          compute $d_{\text{left}}$ and $d_{\text{right}}$ based on $maxCD$ (Equation~\ref{equ:lb})\;
          $\text{idx}_{\text{l}} \leftarrow \textsc{BinarySearch}(\text{childList}, d_{\text{left}})$\;
          $\text{idx}_{\text{r}} \leftarrow \textsc{BinarySearch}(\text{childList}, d_{\text{right}})$\;
          \For(\tcp*[f]{IntraCP}){$t$ in $[\text{idx}_{\text{l}},\, \text{idx}_{\text{r}}]$}
          {
            child $\leftarrow \text{childList}[t]$\; 
            \If{child is a leaf cluster}
            {
            $\mathcal{S}$.add(tuple($\Gamma(q, \text{child})^2$, child, $\text{child}.lb_{\min}$))\;
            update $maxCD$\;
            }
            \Else
            {
            $\mathcal{S}$.add(tuple($\text{child}.lb_{\min}$, child, $\Gamma(q, \text{child})^2$))\;  
            }
          }
       }
    }
}
  \Return{$\mathcal{R}$}\;
\end{algorithm}

\noindent
\textbf{Query Processing.} 
As shown in Algorithm~\ref{alg:hivfclip}, we still retain $nprobe$ to control the number of accessed leaf clusters. Unlike \ourIVF, however, the top-$nprobe$ clusters are selected adaptively via hierarchical traversal instead of exhaustive scanning over all leaf clusters. 
During initialization (Lines 1--5), we initialize a result queue $\mathcal{R}$ of size $k$ and a cluster queue $\mathcal{S}$ seeded with the root cluster tuple, consisting of its $lb_{\min}$, cluster pointer, and squared distance to $q$. 
$\mathcal{S}$ is ordered by the \emph{first key}, which differs between leaf and non-leaf clusters: for leaf clusters, the key is the squared query-centroid distance, while for non-leaf clusters, it is the $lb_{\min}$ (as explained later). We maintain dynamic thresholds $maxDist$ and $maxCD$ for the current $k$-NN and top-$nprobe$ cluster distances, and track $cur\_{nprobe}$ as the number of accessed leaf clusters.
 
At each iteration (Lines 6--21), the algorithm pops the tuple in $\mathcal{S}$ with the smallest priority key. If the tuple corresponds to a \emph{leaf cluster}, \ourHIVF invokes the same IntraCP procedure as \ourIVF, accessing only a contiguous range of vectors whose lower bounds are below $maxDist$, and incrementing $cur\_{nprobe}$ by 1. If the tuple corresponds to a \emph{non-leaf cluster}, \ourHIVF applies both InterCP and IntraCP to determine the accessed sub-clusters, as shown in Figure~\ref{fig:HIVF} and detailed below: 

\noindent
$\triangleright$
\uline{\textit{Inter-Cluster Pruning (Line 11).}}
A non-leaf cluster can be safely pruned when the lower bound distances between $q$ and \emph{all} of its sub-cluster centroids exceed $maxCD$. This is detected by checking $lb_{min}$ of the current tuple, which represents the minimum lower bound distance between $q$ and any of its sub-clusters centroids. If $lb_{min} > maxCD$, none of its children can enter the top-$nprobe$ leaf cluster queue, and the cluster is discarded. 

\noindent
$\triangleright$
\uline{\textit{Intra-Cluster Pruning (Line 12--21).}} If the non-leaf cluster cannot be entirely pruned, \ourHIVF identifies a \emph{contiguous segment} of sub-clusters whose lower bound distances to $q$ are below $maxCD$, again using binary searches to determine the boundary indexes. For each qualified sub-cluster $child$, we compute both its exact squared distance and its minimum lower bound distance to $q$. If $child$ is a leaf cluster, it is inserted into $\mathcal{S}$ with its exact distance as the first key (Line~18); otherwise, it is inserted with $lb_{\min}$ as the first key (Line~21). This design ensures that a non-leaf cluster with $lb_{\min}$ smaller than the exact squared distance of any candidate leaf cluster is expanded first, since it may still contain a closer leaf cluster.

The search terminates once $\mathcal{S}$ becomes empty or $cur\_{nprobe} \geq nprobe$. At that point, the result queue $\mathcal{R}$ is returned.

\begin{figure}[!t]
    \centering 
    \includegraphics[width=\linewidth]{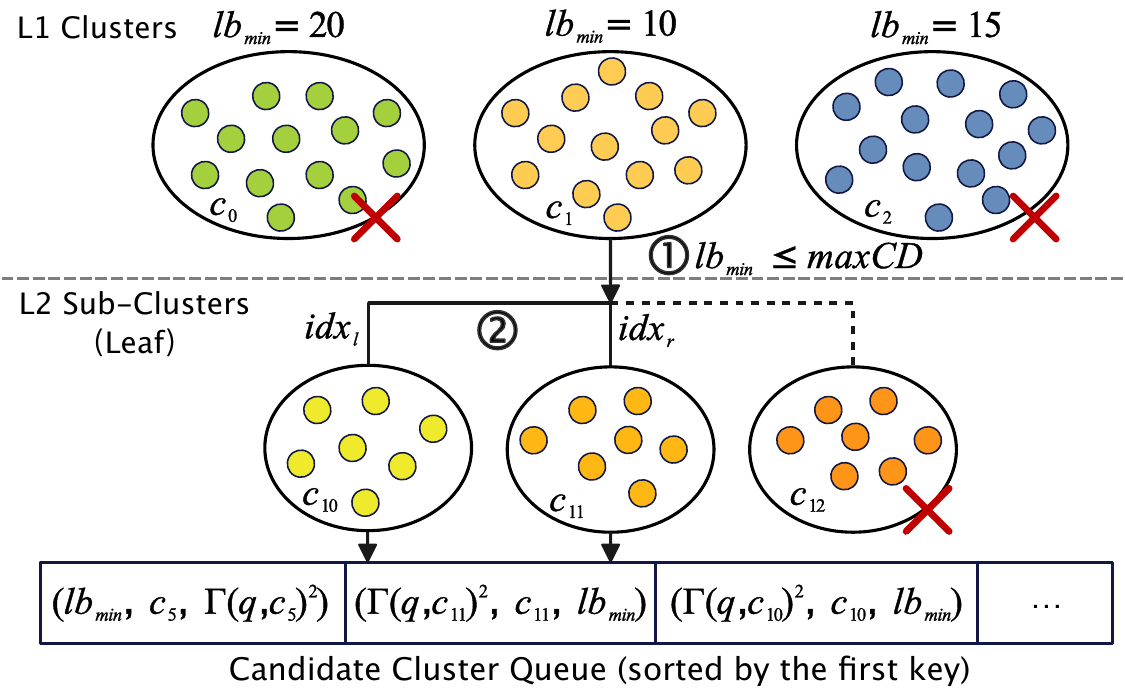}
    \vspace{-0.2in}
    \caption{The inter- and intra-cluster pruning in \ourHIVF.}
    \vspace{-0.15in}
    \label{fig:HIVF}
\end{figure}

\noindent
\textbf{Time Complexity.} 
Compared with \ourIVF, \ourHIVF adopts a hierarchical traversal strategy that reduces the cost of identifying the top-$nprobe$ clusters from $O(nlist)$ to $O(\log nlist)$, where $nlist$ denotes the number of leaf clusters. Empirically, \ourHIVF is more efficient than \ourIVF, as validated in Section~\ref{sec:exp_static}. The overall time complexity of \ourMethod under HIVF is $O(a_c + 2a_c \log s + 2 \cdot nprobe \cdot \log l)$, where $a_c$ is the average number of accessed non-leaf clusters with $a_c = \Theta(\log nprobe)$, and $s$ is the average branching factor of non-leaf nodes. The first two terms correspond to the InterCP and IntraCP costs at the non-leaf levels, while the last term accounts for IntraCP at the leaf level and dominates the overall complexity.

\section{Dynamic Workload Processing}
\label{sec:update}


Existing IVF methods support dynamic workloads mainly via in-place updates, i.e., directly modifying the IVF
index while serving queries~\cite{ada-ivf, spfresh, Quake}. 
Each update first identifies the target cluster, incurring an $O(nlist)$ cost without auxiliary routing indexes (e.g., HNSW~\cite{HNSW}), followed by lightweight list modifications (e.g., append-only insertions).
Therefore, update performance is dominated by the cluster identification phase.
For \ourMethod variants, the update overhead is further amplified since the ordering of inverted lists and distances must be additionally maintained.
Moreover, skewed updates can lead to severe cluster imbalance, often triggering costly online index maintenance such as rebalancing or rebuilding~\cite{Quake, spfresh}. 


To overcome these limitations, we propose \ourLSM, an LSM-style architecture that enables highly efficient updates by deferring costly maintenance to background and keeping the inverted lists sorted via an LSM-style merge sort process, while mitigating the search inefficiencies inherent to LSM-based designs through \ourMethod-based optimizations.
As shown in Figure~\ref{fig:update}, \ourLSM maintains a memory buffer at Level~0 and organizes IVF indexes across deeper LSM levels with increasing hierarchy: Level~1 uses single-layer \ourIVF, while Level~$i$ adopts an $i$-layer \ourHIVF. This design is motivated by the increasing data volume at deeper LSM levels, where the number of clusters grows accordingly, making multi-layer IVF structures more efficient at higher levels. We then detail how \ourLSM efficiently supports updates and queries.

\begin{figure}[!t]
    \centering
    \includegraphics[width=\linewidth]{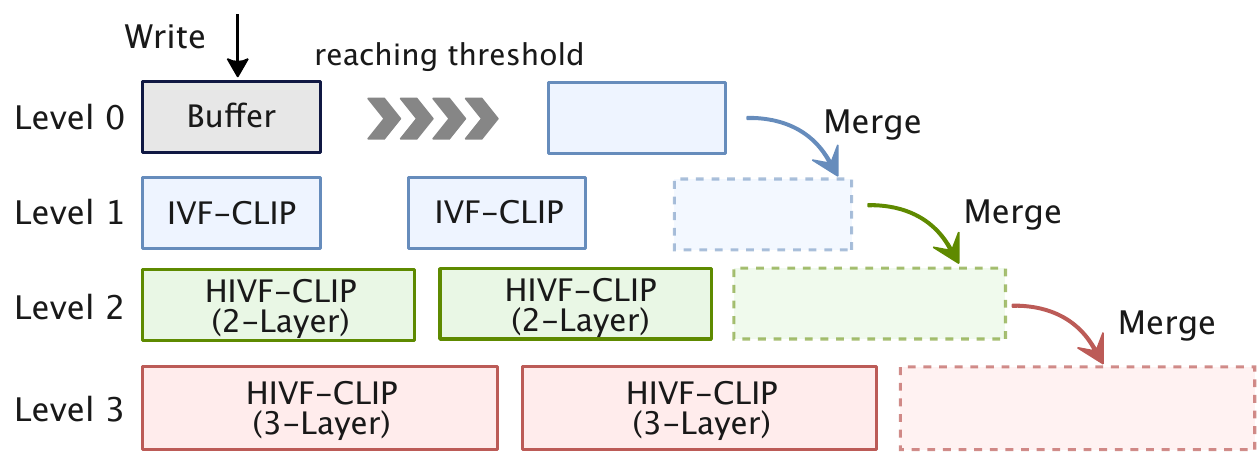}
    \vspace{-0.25in}
    \caption{The architecture of \ourLSM.}
   \vspace{-0.25in}
    \label{fig:update}
\end{figure}

\noindent
\textbf{Update Processing.} 
As in conventional LSM designs, newly inserted vectors are first buffered in $L_0$ and become immediately visible to queries. Once the buffer reaches its threshold, it is flushed and merged into Level~1. As levels fill up, merges are propagated to deeper levels via cascading compaction (e.g., $L_0 \rightarrow L_1 \rightarrow L_2 \rightarrow \cdots$). All merges are executed in the background, ensuring non-blocking query processing. 
During ongoing merges, queries are executed over a \emph{logical union} of all components (including the buffer and all existing levels), guaranteeing that no recently inserted vectors are missed, at the cost of temporarily increased query cost due to multi-level access.
Once a merge completes, the newly built component atomically replaces the corresponding old ones in this merge. 

The merge operation handles two cases: (i) merging unindexed vectors from $L_0$ into \ourIVF at $L_1$, and (ii) merging two IVF indexes across adjacent levels. In the first case, vectors from $L_0$ are assigned to the clusters of \ourIVF. For each cluster, newly assigned vectors are first collected into a list $list_{new}$, which is sorted by centroid–vector distances. This list is then merged with the existing list $list_{old}$ using a merge sort  process, following the standard compaction in LSM trees.
In the second case, vectors from the smaller index are reassigned into the larger index using the same merge sort process. For \ourHIVF, this merge is only restricted to leaf clusters.
To maintain clustering quality as data grows, we periodically trigger centroid re-training whenever a merge causes a level’s size to exceed predefined capacity thresholds (e.g., 25\%).



\noindent
\textbf{Query Processing.}
Unlike classic LSM-based search, which queries each level independently and then aggregates the results, \ourLSM flattens IVF indexes across all levels and performs a unified search. As shown in Algorithm~\ref{alg:lsm}, \ourLSM applies the \ourHIVF query procedure (i.e., Algorithm~\ref{alg:hivfclip}) with a modified initialization. 
Specifically, $\mathcal{R}$ is initialized with the top-$k$ results in the buffer (Line 1), and $\mathcal{S}$ is populated with candidate clusters from all levels (Lines 2--9): top-$nprobe$ clusters from the \ourIVF level using the exact distance as the priority key, and root clusters from \ourHIVF levels using their $lb_{\min}$ as the priority key.
The search then proceeds as in Algorithm~\ref{alg:hivfclip}, iteratively expanding the most promising cluster and applying \ourMethod for pruning until global $nprobe$ leaf clusters are accessed. In summary, \ourLSM improves query efficiency through:
(1) employing layer-appropriate IVF structures at different LSM levels, reducing cluster-identification cost as data volume grows; (2) jointly identifying the global top-$nprobe$ clusters across all levels within a single search process, avoiding level-by-level lookup and aggregation; and (3) using \ourMethod to prune both clusters and vectors, reducing data accesses and computations.

\begin{algorithm}[t]
  \caption{\ourLSM($q$, $k$, $nprobe$)}
  \label{alg:lsm}
  \small
  \KwIn{query vector $q$, parameter $k$, probed cluster count $nprobe$}
  \KwOut{$k$ nearest vectors of $q$}
  
  Result queue $\mathcal{R} = \textsc{Search}(\text{buffer}, q, k)$\;
  
  Candidate cluster queue $\mathcal{S} = \emptyset$; \tcp{sorted by the first key} 

    \For(\tcp*[f]{Adding clusters in all levels}){ $l$ in $(0,\, MaxL]$}
    {
       \If{$level[l].index ==$ \ourIVF}
       {
         $candidateC \leftarrow \textsc{TopKClusters}(q, nprobe)$ \;
         \For{each $c$ in $candidateC$}
         { 
             $\mathcal{S}$.add(tuple($\Gamma(q, c)^2$, $c$, $c.lb_{\min}$))\;
         }
       }
       \If{$level[l].index ==$ \ourHIVF}
       {
       $\mathcal{S}$.add(tuple($\text{root}.lb_{\min}$, $\text{root}$, $\Gamma(q, \text{root})^2$))\;
       }
    }

  Perform Lines 2–21 in Algorithm~\ref{alg:hivfclip}\;
  \Return{$\mathcal{R}$}\;
\end{algorithm}


\section{Experiments}
\label{sec:exp}
\subsection{Experimental Setup}
\label{sec:exp-setup}
\noindent \textbf{Datasets and Evaluated Workloads.}
As summarized in Table~\ref{tab:datasets}, we evaluate our methods on six widely used benchmark datasets for ANN search, including 
\glove~\cite{glove}, \sift~\cite{pq}, \deep~\cite{deep}, \tin~\cite{ANNdatasets}, \msong\cite{msong}, and \gist~\cite{pq}.
These datasets span diverse application domains, including text, image, and audio retrieval, with vector dimensionalities ranging from 128 to 960. For \sift, we use three scales: a 1M-vector subset for most experiments, a 10M-vector subset for scalability evaluation, and a 1B-vector subset for disk-based experiments. Table~\ref{tab:workloads} summarizes the evaluated workloads. In addition to a static, query-only workload, we consider three dynamic workloads with varying update-to-query ratios, modeling update-light, balanced, and update-intensive scenarios.

\begin{table}[t]
  \centering
  \vspace{-0.15in}
  \caption{Statistics of datasets}
  \vspace{-0.15in}
  \label{tab:datasets}
  \resizebox{\columnwidth}{!}{
  \begin{tabular}{c|c|c|c|c|c}
    \toprule
    Dataset & Dimension & \#Vectors & \#Queries & Data Size (GB) & Source \\
    \hline
    \hline

    \rowcolor{gray!15}
    SIFT
    & 128
    &
    \begin{tabular}{@{}c@{}}
      1,000,000 \\
      \hline
      10,000,000 \\
      \hline
      1,000,000,000
    \end{tabular}
    & 10,000
    &
    \begin{tabular}{@{}c@{}}
      0.48 \\
      \hline
      4.81 \\
      \hline
      119.21
    \end{tabular}
    & Images \\
    \hline

    GloVe 
    & 200 
    & 1,193,514 
    & 1,000 
    & 0.89
    & Texts \\
    \hline

    \rowcolor{gray!15}
    Deep
    & 256
    & 1,000,000
    & 1,000
    & 0.96
    & Images \\
    \hline

    Tiny
    & 384
    & 5,000,000
    & 1,000
    & 7.17
    & Images \\
    \hline

    \rowcolor{gray!15}
    MSong
    & 420
    & 994,185
    & 1,000
    & 1.56
    & Audios \\
    \hline

    GIST
    & 960
    & 1,000,000
    & 1,000
    & 3.58
    & Images \\
    \bottomrule
  \end{tabular}}
  \vspace{-0.1in}
\end{table}

\begin{table}[t]
\centering
\caption{Evaluated workloads}
\vspace{-0.4cm}
\label{tab:workloads}
\small
\setlength{\tabcolsep}{3pt}
\renewcommand{\arraystretch}{1.0}

\resizebox{0.92\linewidth}{!}{
\begin{tabular}{ccccc}
\toprule
\multicolumn{2}{c}{\textbf{Workload}} &
\textbf{Update : Query} &
\textbf{Scenario} \\
\midrule

\multicolumn{2}{c}{Static} &
0 : 10 &
Query-only \\

\midrule

\multirow{3}{*}{Dynamic}
 & Low  & 2 : 8 & Update-light \\
 & Medium  & 5 : 5 & Balanced \\
 & High & 8 : 2 & Update-intensive \\

\bottomrule
\end{tabular}
}
\end{table}

\begin{figure*}
    \centering
    \includegraphics[width=\linewidth]{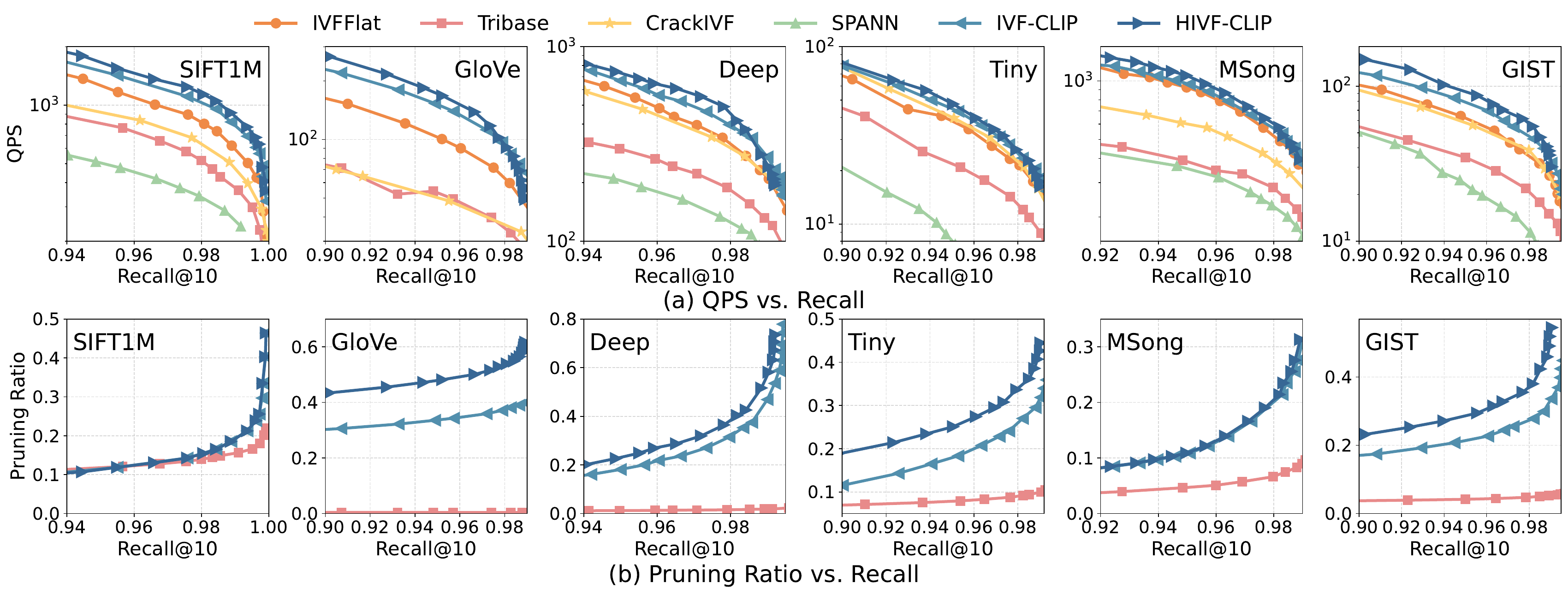}
    \vspace{-0.3in}
    \caption{Overall query performance under static workloads.}
    \vspace{-0.15in}
    \label{fig:staticperformance}
\end{figure*}

\noindent \textbf{Compared Methods.} We compare our methods against other pure IVF-based indexes under static and dynamic workloads.
Under static workloads that contain only queries, we compare \ourIVF and \ourHIVF with static single-layer methods, i.e., \ivfflat~\cite{meta-faiss}, \tribase~\cite{Tribase}, and \crackivf~\cite{CrackIVF}, as well as a multi-layer method, \spann~\cite{spann}.
Under dynamic workloads, we compare \ourLSM with SOTA dynamic methods, i.e., \spfresh~\cite{spfresh} and \quake~\cite{Quake}.
Descriptions of these baselines are detailed below.

\begin{itemize}
    \item \textbf{\ivfflat.} The standard IVF baseline implemented in Faiss~\cite{meta-faiss}, which partitions vectors into clusters and performs exhaustive vector scans within the probed lists without pruning.
    \item \textbf{\tribase.} An IVF-based pruning baseline that leverages the triangle inequality to derive distance- and angle-based bounds for each vector in the probed lists, eliminating those that do not satisfy the pruning criteria.
    \item \textbf{\crackivf.} An adaptive IVF baseline that incrementally reorganizes inverted lists during query processing to improve query performance.
    
    \item \textbf{\spann.} An HIVF baseline that constructs multi-layer balanced clusters and guides query processing from coarse to fine levels to reduce the search space.
    \item \textbf{\spfresh.} A dynamic HIVF baseline built on top of \spann, which maintains index quality under updates via several in-place strategies, i.e., online split and merge operations.
    \item \textbf{\quake.} 
    A dynamic HIVF method that adaptively reorganizes multi-layer partitions using a cost model. During query processing, it employs query-adaptive strategies (e.g., recall estimation) to dynamically select sub-clusters to access.
\end{itemize}

\noindent \textbf{Metrics.} 
Query efficiency and accuracy are measured using QPS (i.e., queries per second) and recall (i.e., $Recall@k$, defined in Section~\ref{sec:definition}), respectively.
Pruning effectiveness is quantified by the pruning ratio, defined as the reduction ratio of accessed data vectors before and after applying pruning.
To further illustrate pruning behavior, we additionally report the number of accessed clusters and the number of distance computations.
Throughput under dynamic workloads is evaluated by operation count per second.

\vspace{0.05in}
\noindent \textbf{Parameter Settings.} 
The number of clusters $nlist$ is set to 4,096 for all methods.
The key parameter $nprobe$ is dynamically adjusted to achieve different trade-offs between QPS and recall.
For \ourMethod variants, we set $\beta=0.001$ and use $20$ slices by default. For \ourHIVF, the number of cluster layers defaults to $2$.
For \ourLSM, the maximum data capacity of Levels~0,~1,~2, and~3 is set to 10K, 100K, 1M, and 10M vectors, respectively.
For all other baselines, we use the default method-specific parameters from their original implementations.

\vspace{0.05in}
\noindent \textbf{Implementations.}
Our experiments are conducted on a Linux server equipped with 512\,GB RAM, an Intel Xeon Platinum 8457C processor, and a 1.92\,TB Samsung enterprise SSD.
All methods are implemented in C++ with SIMD optimizations \textbf{\textit{opened}}. The baselines are implemented using their respective libraries~\cite{meta-faiss, sptaglib, Tribaselib, crackivflib, quakelib}. 
Following prior work~\cite{NSG, NHQ}, all available threads are used for index construction, while query processing is performed in a single-threaded, single-query setting.

\subsection{Performance under Static Workloads}
\label{sec:exp_static}

\noindent \textbf{Overall Query Performance.}
Figure~\ref{fig:staticperformance}(a) reports the QPS--Recall trade-offs for all static methods.
Across all datasets, \ourIVF and \ourHIVF consistently achieve the highest QPS at the same recall levels. On average, \ourIVF outperforms the strongest baseline in QPS by 23.4\%, 46.7\%, 20.5\%, 11.5\%, 5.8\%, and 29.4\% on \sift, \glove, \deep, \tin, \msong, and \gist, respectively. Building on this, \ourHIVF further improves over \ourIVF by an additional 11\%, 15\%, 14\%, 7\%, 10\%, and 20\% on average across the six datasets, benefiting from its hierarchical design that reduces candidate cluster identification cost.
Notably, at a high recall of 0.99, \ourIVF reaches up to a 51\% improvement over the strongest baseline, while \ourHIVF pushes the gain to as high as 69\%. 
These gains stem from reduced data access and computation costs enabled by \ourMethod. 

\noindent \textbf{Pruning Performance.} 
Figure~\ref{fig:staticperformance}(b) evaluates pruning effectiveness under different recall levels. This experiment includes only pruning techniques applicable to pure IVF indexes, i.e., \tribase and our methods. Over the recall range shown in the figure, \ourHIVF achieves an average pruning ratio 26.6$\times$ that of \tribase, with per-dataset multipliers of 7.4$\times$, 22.0$\times$, 123.1$\times$, 2.7$\times$, 1.06$\times$ and 3.5$\times$ respectively, reaching up to 78\% pruning over a recall of 0.99 on \deep dataset. 
We note that pruning effectiveness varies across datasets due to their intrinsic characteristics, particularly cluster distributions. Well-separated clusters yield tighter bounds and enable more aggressive pruning.
On the \sift dataset, although \ourIVF attains a $0.97\times$ pruning ratio of \tribase when recall is below 0.96, it still provides a $1.87\times$ QPS speedup over \tribase. 
This is because \ourMethod incurs substantially lower pruning overhead, as it avoids per-vector bound evaluation, whereas \tribase requires computing multiple costly bounds for each vector. As a result, \ourMethod achieves higher end-to-end query efficiency even when pruning fewer vectors.

It is worth noting that \tribase relies on the strict triangle inequality for pruning and guarantees no recall loss. Its rigorous pruning criterion makes it more suitable for extremely high recall regimes (e.g., $\geq 0.995$), where more candidates must be examined, creating greater opportunities for pruning. In contrast, \ourMethod adopts a lightweight, approximate pruning strategy that targets high efficiency with slight recall loss, and is particularly effective in moderate-to-high recall regimes (e.g., $\leq 0.995$).

\vspace{-0.1in}
\begin{figure}[h]
    \centering
    \includegraphics[width=0.95\linewidth]{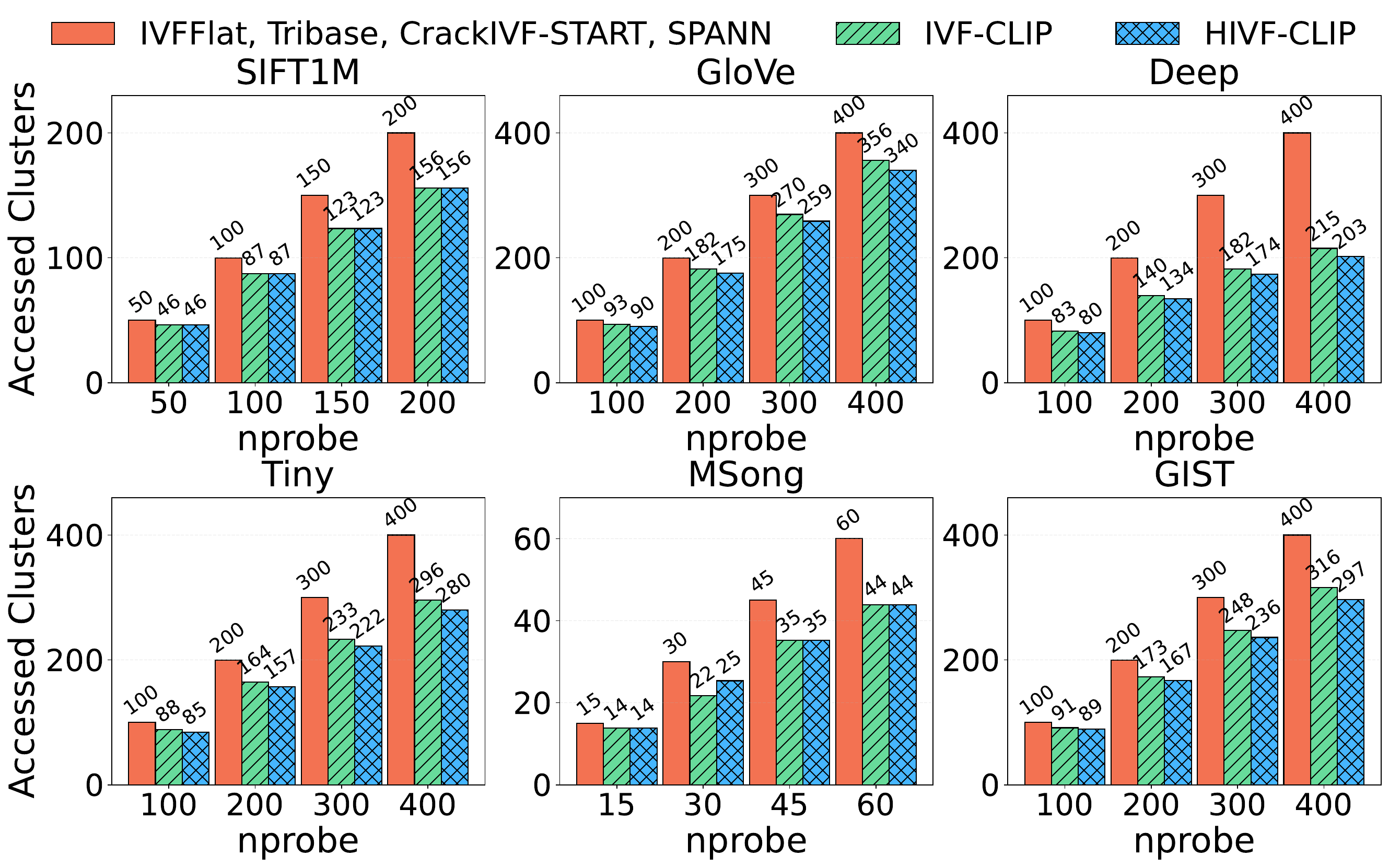}
    \vspace{-0.15in}
    \caption{Comparison of accessed cluster count.}
    \vspace{-0.1in}
    \label{fig:cluster}
\end{figure}

\noindent \textbf{Number of Accessed Clusters.} Figure~\ref{fig:cluster} compares the number of accessed clusters under different $nprobe$ settings. For \ivfflat, \tribase, \crackivf, and \spann, the number of accessed clusters is fully determined by $nprobe$. In contrast, \ourIVF and \ourHIVF leverage \ourMethod to prune the $nprobe$ candidate clusters, reducing the number of accessed clusters by 6.7\%–49.2\% across all datasets and $nprobe$ settings. Moreover, the pruning benefit becomes more pronounced as $nprobe$ increases: with larger $nprobe$ (typically required for higher recall), \ourMethod prunes a larger fraction of candidate clusters. We note that although \ourHIVF accesses a similar number of clusters as \ourIVF due to the shared pruning mechanism, it typically achieves better QPS–recall trade-offs (Figure~\ref{fig:staticperformance}) by further reducing the \emph{cost of identifying the initial top-$nprobe$ clusters} through hierarchical traversal (Section~\ref{sec:HIVFCLIP}). 


\noindent \textbf{Number of Distance Computations.} Figure~\ref{fig:distance} reports the number of distance computations as a function of $nprobe$. As $nprobe$ increases, all methods incur more distance computations. Nevertheless, \ourIVF and \ourHIVF consistently require the fewest distance computations, where the advantage over baselines is typically amplified at larger $nprobe$. 
Specifically, at the largest tested $nprobe$ shown in Figure~\ref{fig:distance}, \ourIVF requires only 0.75$\times$ the distance computations of the best baseline on average, ranging from 0.48$\times$ to 0.87$\times$ across all datasets, while \ourHIVF further reduces this ratio to 0.67$\times$, ranging from 0.43$\times$ to 0.87$\times$.
These results highlight the effectiveness of our InterCP and IntraCP strategies in lowering the number of distance computations.

\begin{figure}
    \centering
    \includegraphics[width=\linewidth]{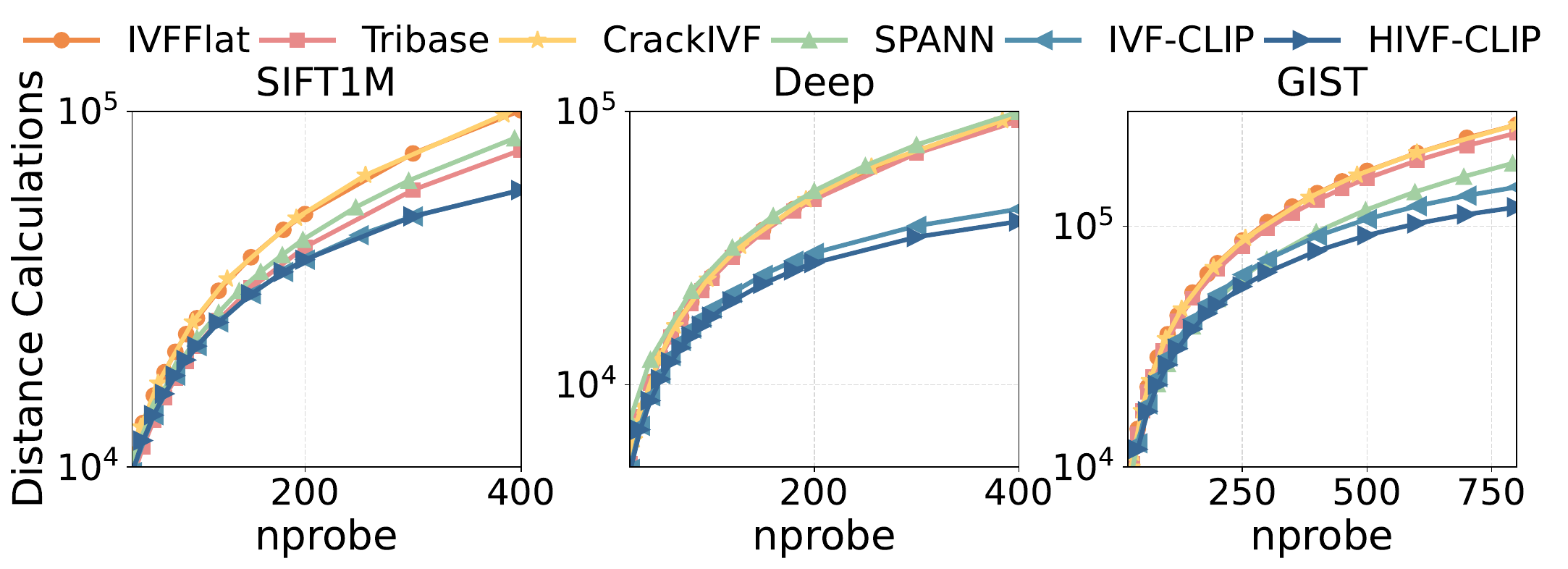}
    \vspace{-0.3in}
    \caption{Comparison of distance computation count.}
    \vspace{-0.1in}
    \label{fig:distance}
\end{figure}

\noindent
\textbf{Index Performance.}
Figure~\ref{fig:index} reports the index construction time, storage footprint, and peak memory usage during querying.
\ourIVF achieves construction time and storage cost comparable to \ivfflat, as it introduces only low computational cost (i.e., list sorting and sampling) and requires small extra storage for centroid–vector distances. Compared with \ourIVF, \ourHIVF requires 8.6\% more construction time and 0.4\% more storage cost, while remaining 73\% and 66\% less than other hierarchical method (i.e., \spann) in these two aspects, respectively. \spann incurs higher construction complexity and larger index size due to its hierarchical balanced clustering and the need to assign boundary vectors to multiple clusters for improving recall. \tribase typically incurs higher construction and storage overhead than \ourIVF, as it requires computing and storing multiple auxiliary neighbors per vector.

\vspace{-0.1in}
\begin{figure}[h]
    \centering
    \includegraphics[width=0.95\linewidth]{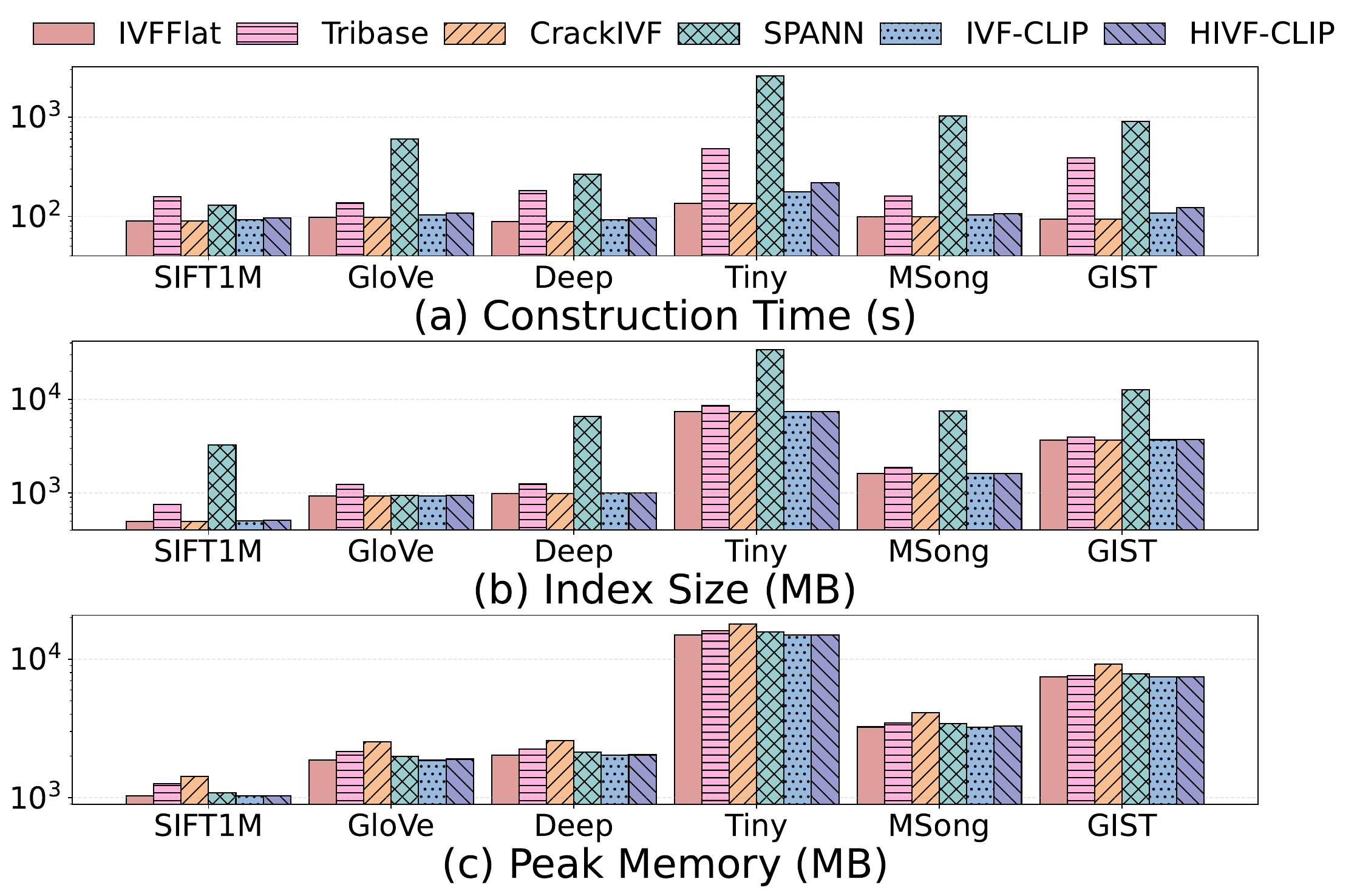}
    \vspace{-0.15in}
    \caption{Construction time, index size and peak memory.}
    \vspace{-0.2cm}
    \label{fig:index}
\end{figure}

In terms of peak memory during querying, most methods exhibit comparable memory footprints, as the accessed data volume dominates the memory overhead rather than the index structure itself. \ourIVF and \ourHIVF incur negligible additional memory overhead beyond standard IVF traversal, while \crackivf shows higher peak memory due to its adaptive cracking mechanism, which maintains additional intermediate state during query processing.

\begin{figure}[h]
    \centering
    \includegraphics[width=\linewidth]{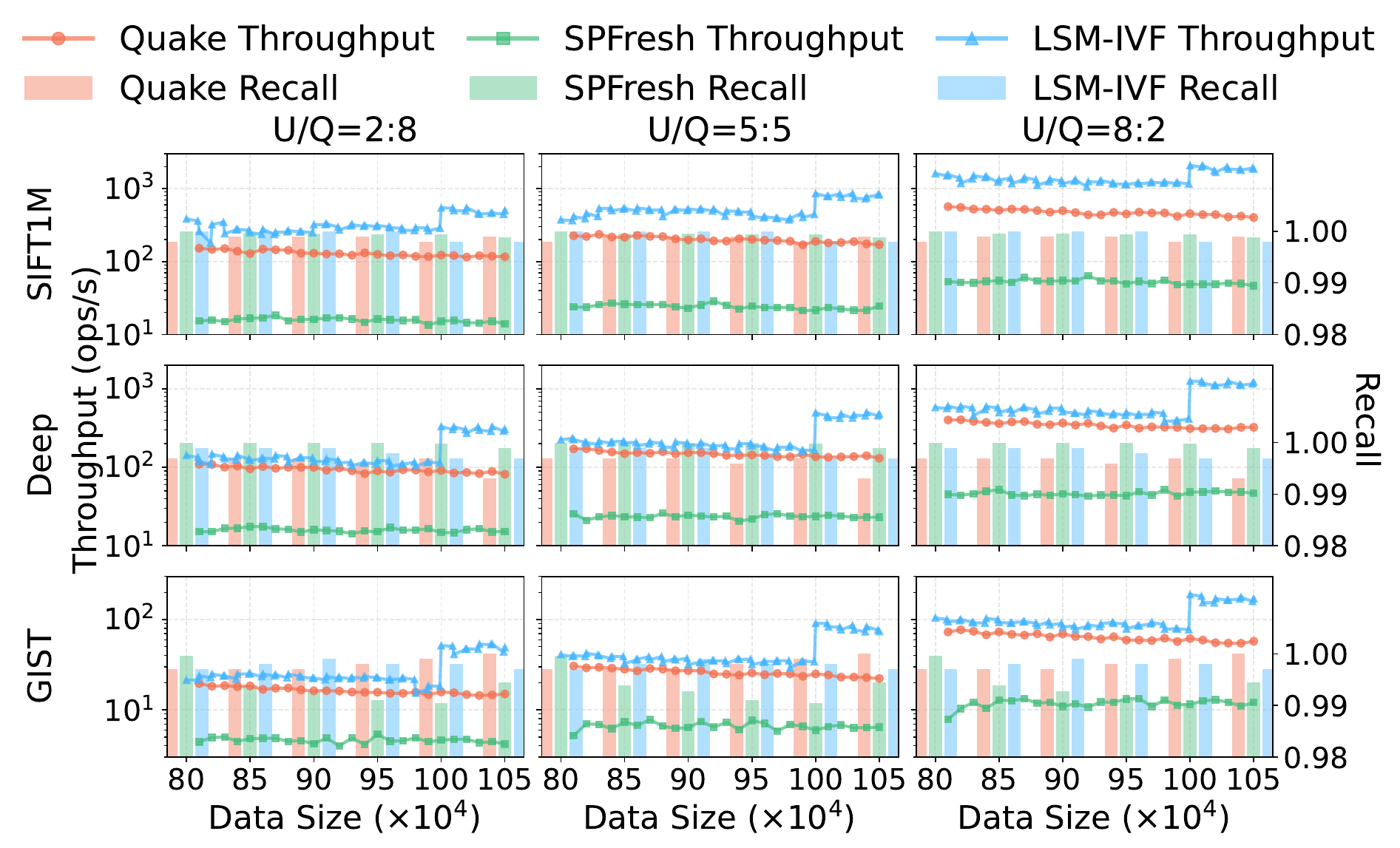}
    \vspace{-0.3in}
    \caption{Performance under varying dynamic workloads.}
    \label{fig:querywithupdates}
\end{figure}

\subsection{Performance under Dynamic Workloads}
\label{sec:exp-update}
\noindent
\textbf{Throughput and Recall.}
Figure~\ref{fig:querywithupdates} reports throughput and recall at $nprobe=400$ under different update-to-query ratios (U/Q). To evaluate \ourLSM after high-level merges, we extend each dataset beyond $10^6$ vectors through random sampling with an added bias. Across all workloads, \ourLSM consistently achieves the highest throughput. Before the dataset size reaches $10^6$, \ourLSM outperforms the strongest baseline (\quake) by 141\%, 38\%, and 39\% on \sift, \deep, and \gist, respectively. Averaged across datasets, its advantage over \quake grows as updates become more frequent, increasing from 65.5\% at U/Q=2:8 to 68.8\% at U/Q=5:5 and 83.8\% at U/Q=8:2. After the dataset size exceeds $10^6$, \ourLSM exhibits fluctuations due to its merge mechanism: throughput gradually declines as updates accumulate in buffers, but rises sharply after merges consolidate data. In particular, the major merge at $10^6$ vectors boosts throughput by 89\%/192\%/186\% on \sift/\deep/\gist, respectively.

In terms of recall, \ourLSM consistently maintains a high and stable recall (close to 0.997). In contrast, \quake exhibits a slight recall degradation on \deep (0.997 $\rightarrow$ 0.993), likely due to its cost-model-driven inverted-list splitting, which may occasionally result in insufficient scanning. Similarly, \spfresh shows noticeable recall fluctuations on \gist (0.999 $\rightarrow$ 0.991), potentially caused by IVF assignment drift under dynamic updates.

\vspace{-0.15in}
\begin{figure}[h]
    \centering
    \includegraphics[width=0.95\linewidth]{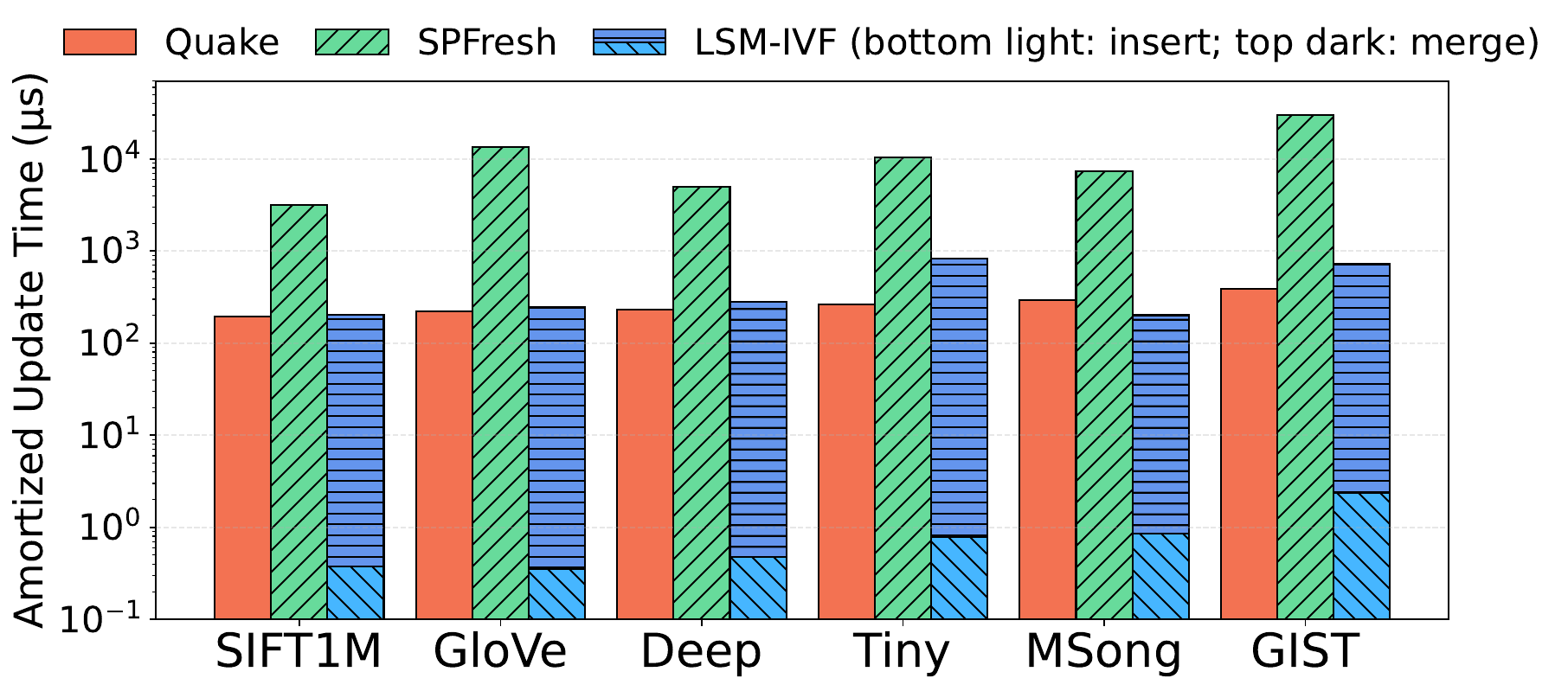}
    \vspace{-0.15in}
    \caption{Update efficiency.}
    \label{fig:update:performance}
\end{figure}
\noindent \textbf{Update Efficiency.}
Figure~\ref{fig:update:performance} compares the update time of \ourLSM with \quake and \spfresh.
\ourLSM decomposes updates into two components: (i) a lightweight online insertion into an in-memory buffer, and (ii) background merges triggered when thresholds are reached, whereas both baselines perform updates entirely online. The online insertion cost of \ourLSM is minimal due to the in-memory buffer design. 
When accounting for merge overhead, the amortized update cost of \ourLSM remains comparable to \quake, and significantly lower than \spfresh.
Note that during merging, queries need to access more levels before the merge completes, temporarily resulting in suboptimal query performance.
\spfresh incurs higher update latency as each insertion requires a graph traversal to locate neighbors and maintain connectivity. 


\subsection{In-Depth Analysis of \ourMethod}
\label{sec:exp-clip}

\noindent \textbf{Ablation Study.}
Figure~\ref{fig:Ablation} reports an ablation study of \ourIVF by selectively enabling pruning components. We consider two special settings: (i) inter-cluster pruning only, and (ii) intra-cluster pruning only (using $nprobe$-based cluster selection without inter-cluster pruning). 
Compared to the full \ourIVF, the inter-cluster-only variant exhibits an average 10.8\% QPS drop at 0.99 recall, while the intra-cluster-only variant leads to an average 20.4\% QPS drop, showing that removing either pruning strategy noticeably degrades the performance of \ourIVF. Compared to \ivfflat without any pruning strategies, the inter- and intra-cluster-only variants still achieve 10.7\%--29.0\% and 5.3\%--28.7\% higher QPS across datasets, respectively, showing effectiveness of both pruning strategies.

\begin{figure}[h]
    \centering
    \vspace{-0.15in}\includegraphics[width=\linewidth]{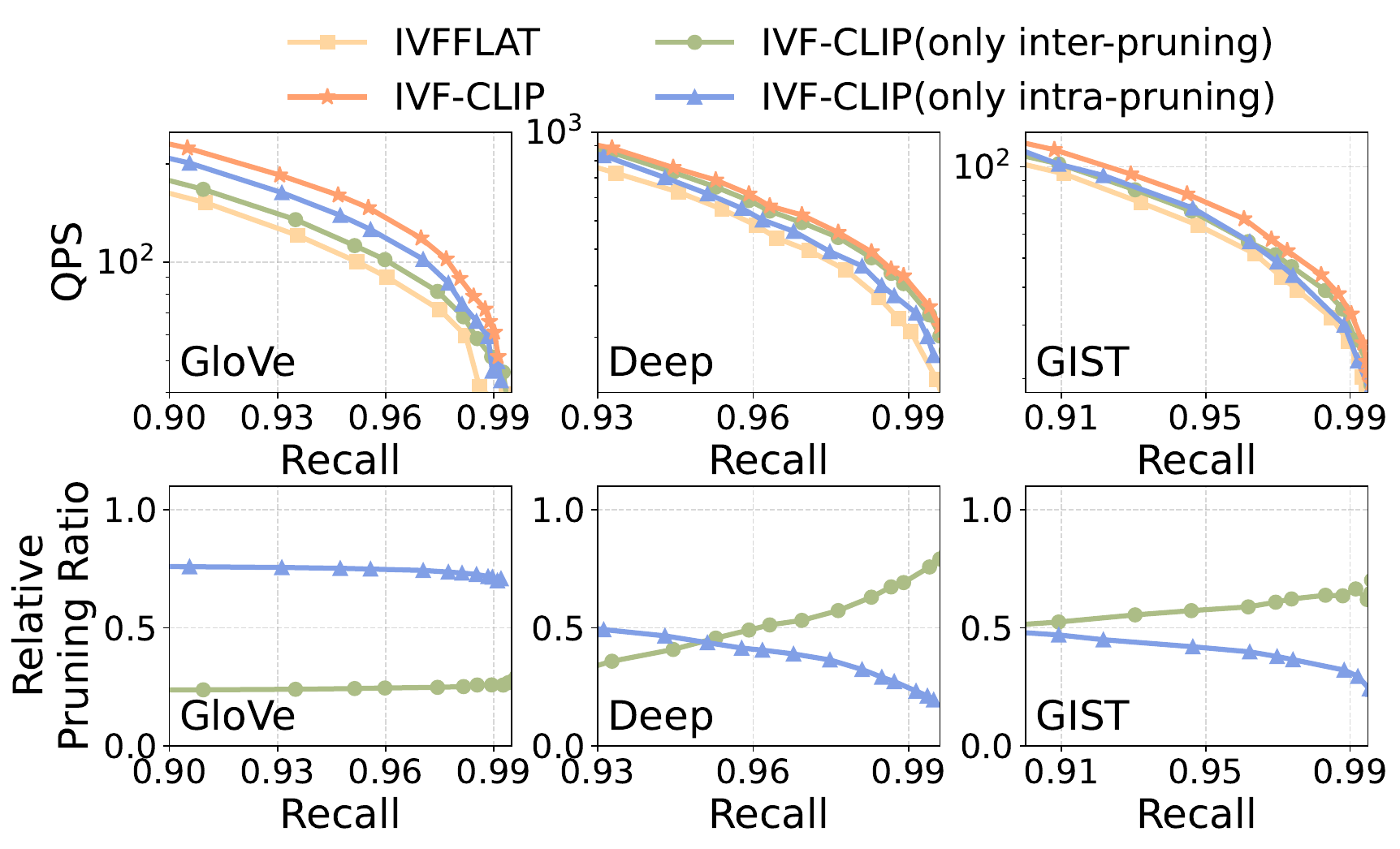}
    \vspace{-0.3in}
    \caption{Ablation study, where the relative pruning ratio is normalized by the pruning ratio of the full \ourIVF.}
    \label{fig:Ablation}
\end{figure}

The relative pruning ratio results show that the dominant pruning strategy varies across datasets. Inter-cluster pruning contributes an average of 65.6\% and 60.1\% of the full pruning effectiveness on \deep and \gist, respectively, making it the dominant factor on these datasets, while intra-cluster pruning contributes 74.2\% on \glove and plays a more important role there. These results demonstrate that both pruning strategies substantially contribute to the overall pruning effectiveness, and combining them consistently achieves the best performance. We omit the ablation of \ourHIVF due to its similar behavior to \ourIVF and space limitations.


\begin{figure}[h]
    \centering
    \vspace{-0.15in}\includegraphics[width=\linewidth]{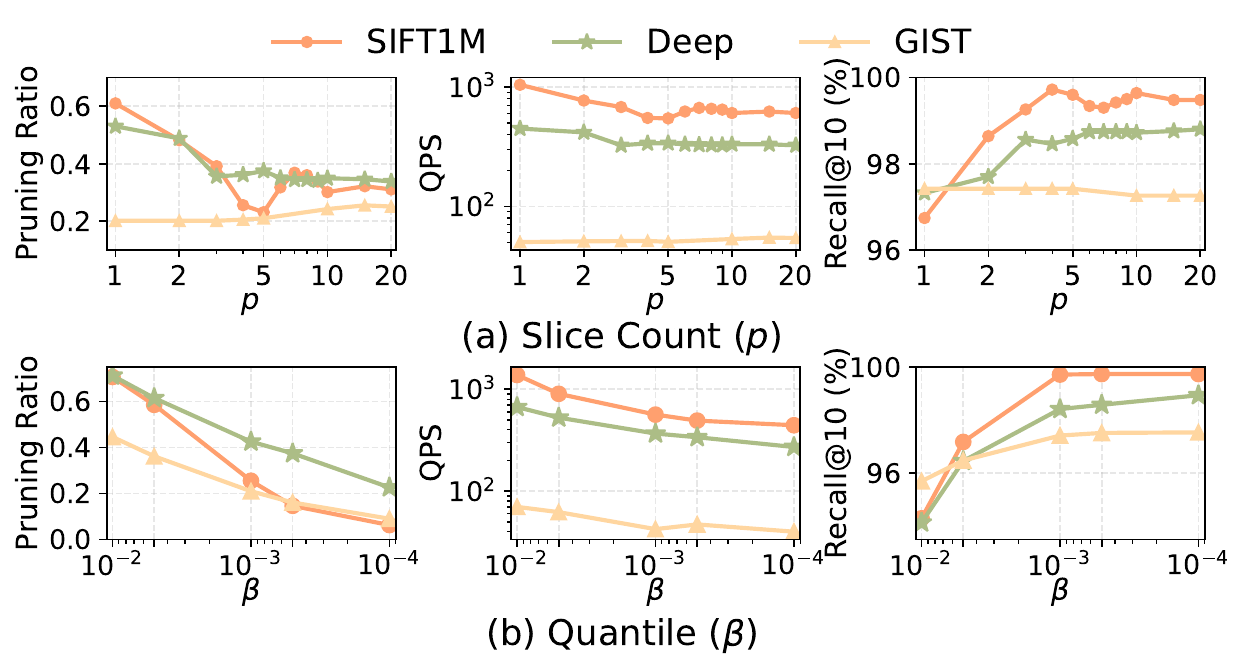}
    \vspace{-0.3in}
    \caption{Evaluating the effect of $p$ and $\beta$.}
    \label{fig:Parameter}
\end{figure}

\noindent \textbf{Parameter Sensitivity.} 
Figure~\ref{fig:Parameter}(a) examines the impact of $p$ (i.e., the number of slices) on pruning effectiveness and query quality under $nprobe=200$. When $p$ is small, the slicing is too coarse, leading to suboptimal bound quality and unstable pruning ratios and recall. As $p$ increases, the pruning behavior becomes progressively more stable and eventually plateaus, since finer slicing provides more accurate bounds for pruning. In practice, $p=20$ is sufficient to achieve near-stable pruning performance for most datasets.

Figure~\ref{fig:Parameter}(b) evaluates the effect of $\beta$ (i.e., the quantile used to set $\lambda$) on query performance under $nprobe=200$. Larger $\beta$ applies more aggressive pruning and improves efficiency but may hurt recall, whereas smaller $\beta$ is more conservative and preserves accuracy at the cost of reduced pruning. Overall, setting $\beta=0.001$ provides a good trade-off between efficiency and accuracy, achieving high recall while maintaining high pruning effectiveness.

\vspace{-0.1in}
\begin{figure}[h]
    \centering
    \includegraphics[width=\linewidth]{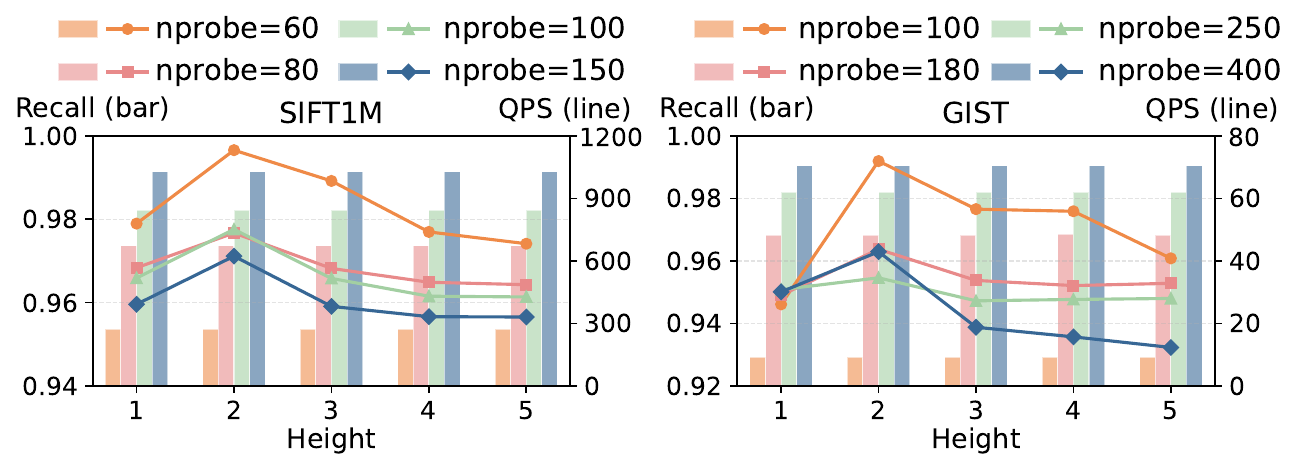}
    \vspace{-0.35in}
    \caption{Effect of hierarchy height in \ourHIVF.}
    \label{fig:height}
\end{figure}

Figure~\ref{fig:height} evaluates the effect of the hierarchy height in \ourHIVF. Overall, increasing the height has negligible impact on recall, while QPS exhibits a non-monotonic trend, i.e., first increasing and then decreasing, with the optimal height at 2. This behavior reflects a trade-off: additional layers reduce the cost of identifying the top-$nprobe$ clusters, but also incur extra traversal and routing overhead, which eventually outweighs the benefits and degrades performance. These results justify our default choice of using two layers, which achieves the best trade-off between efficiency and overhead.

\vspace{-0.1in}
\begin{figure}[h]
    \centering
    \includegraphics[width=\linewidth]{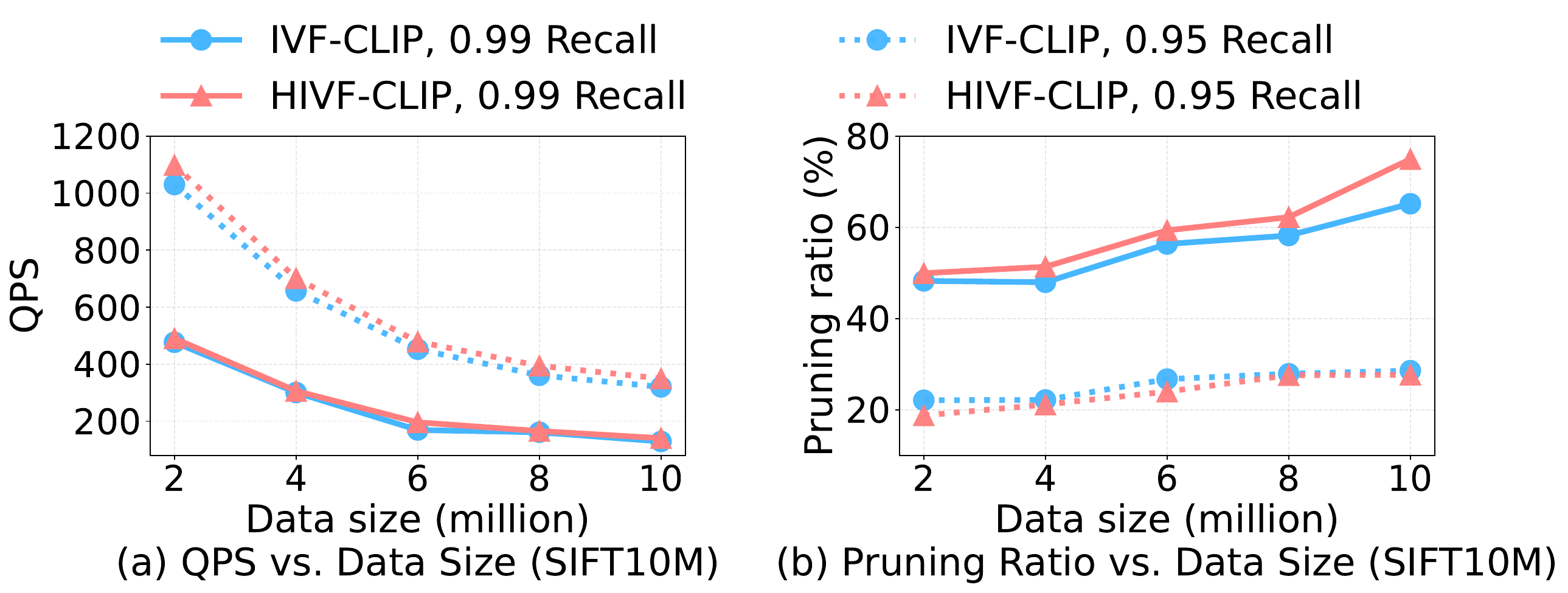}
    \vspace{-0.3in}
    \caption{Evaluating the scalability.}
    \label{fig:scalability}
\end{figure}

\noindent \textbf{Scalability Analysis.}
Figure~\ref{fig:scalability} evaluates scalability under two recall targets (0.95 and 0.99). As shown in Figure~\ref{fig:scalability}(a), when the dataset size increases from 2M to 10M vectors ($5\times$), QPS decreases sub-linearly at both recall levels. For example, at 0.99 recall, \ourIVF drops from about 480 to 130 QPS ($\sim$3.7$\times$), while \ourHIVF decreases from about 500 to 150 QPS ($\sim$3.3$\times$). A similar trend is observed at 0.95 recall.
Figure~\ref{fig:scalability}(b) shows that the pruning ratio also increases steadily with data size under both recall settings. 

\subsection{Evaluation in Disk-Resident Settings}
\label{sec:exp-extended}
This section evaluates \ourIVF in a disk-resident IVF setting. We extend \ourIVF to this setting by storing cluster centroids, inverted lists (vector identifiers only), vector--centroid distances, and $\lambda$ values in memory, while keeping the raw vectors on SSD. During query processing, the in-memory metadata supports both inter- and intra-cluster pruning, reducing SSD accesses and disk I/O.

Figure~\ref{fig:diskexp} compares \ourIVF with disk-based baselines (i.e., \spann and \ivfflat) on the 1-billion-scale \sift dataset. \ourIVF consistently outperforms both baselines across all recall levels. At 0.99 recall, it achieves 34.2\% higher QPS and 22.5\% lower I/O than \ivfflat. At 0.95 recall, the gains change to 50.2\% higher QPS and 18\% lower disk I/O. In contrast, \spann incurs 5--8$\times$ higher I/O due to frequent random accesses during graph traversal. The pruning ratio of \ourMethod exceeds 65\% at 0.99 recall, demonstrating that \ourIVF remains effective at billion scale under disk-resident settings.

\begin{figure}
    \centering
    \includegraphics[width=\linewidth]{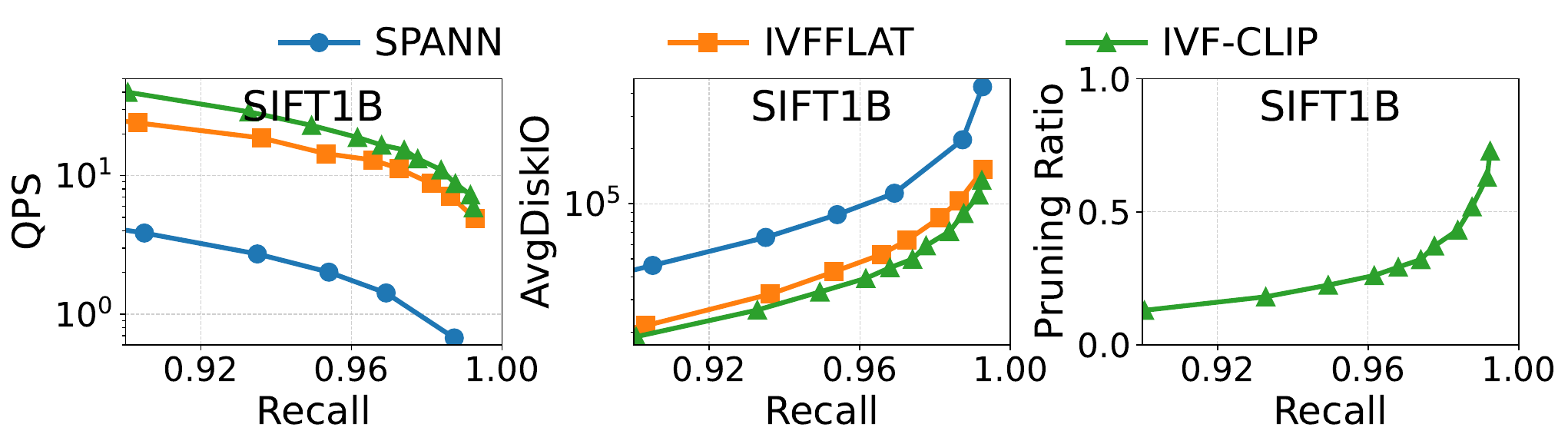}
    \vspace{-0.35in}
    \caption{Evaluation in the disk-resident setting.}
    \label{fig:diskexp}
\end{figure}



\section{Related Work}
\textbf{ANN Algorithms.}
ANN search methods are commonly categorized into four families: (1) partition-based~\cite{KDtree, coverTree, Mtree, RandomProj, ScalableNN, meta-faiss, spann, Quake, MicroNN}, (2) hash-based~\cite{QALSH, srs, idec, TODS, learnToHash}, (3) graph-based~\cite{HNSW, NSG, NSSG, tau-MNG, HVS, LSH-APG}, and (4) quantization-based approaches~\cite{pq, OPQ, RaBitQ, CacheLocality, vectorCompression, VQ, IMI}. This work focuses on IVF-based methods within the partition-based family. To reduce distance computation costs, numerous methods have been proposed~\cite{DCO, DCO2, LSH-APG, pq, RaBitQ, finger, trim, Tribase}. However, most of them either reduce the dimensionality of distance computations or rely on auxiliary structures (e.g., hash tables or quantization codes) for pruning, making them unsuitable for reducing cluster and vector accesses in pure IVF indexes. For example, TRIM~\cite{trim} exploits PQ-derived bounds for pruning and is therefore applicable only to IVFPQ rather than pure IVF indexes. Tribase~\cite{Tribase} is a notable exception that derives triangle-inequality-based bounds for IVF pruning. However, its pruning conditions are often overly conservative and incur considerable computational overhead. 

\noindent
\textbf{Dynamic IVF Index Maintenance.}
Under dynamic workloads, insertions and deletions may cause centroid drift and inverted-list imbalance, degrading IVF query performance. Existing methods maintain IVF indexes through online reorganization, such as repartitioning, split/merge, and cost-aware rebalancing~\cite{ada-ivf, Quake, spfresh}. Although effective, these maintenance operations compete with query processing for system resources, often reducing overall throughput. In contrast, LSM-based designs~\cite{lsm, bigtable, blsm} buffer updates and perform merges asynchronously, thereby improving throughput while maintaining query availability. Such designs have recently been extended to graph-based ANN indexes~\cite{LSM-VEC}. However, applying LSM-style designs to IVF indexes remains largely unexplored.

\section{Conclusion}
\label{sec:conclusion}
This paper presents \ourMethod, a lightweight cosine-law-based pruning technique for IVF-based vector search. We theoretically establish its correctness and pruning effectiveness, and develop two IVF variants, \ourIVF and \ourHIVF, to support both flat and hierarchical IVF structures. We further propose \ourLSM, which integrates \ourMethod with an LSM-tree to enable efficient updates while maintaining high query performance. Extensive experiments on static and dynamic workloads demonstrate the effectiveness of all proposed methods. 
While this work focuses on pure IVF indexes, extending \ourMethod to compressed IVF structures (e.g., IVFPQ) and combining it with other pruning techniques are promising directions for future research.



\clearpage

\balance
\bibliographystyle{ACM-Reference-Format}
\bibliography{reference}
\balance


\end{CJK*}
\end{document}
\endinput